\theoremstyle{plain}
\newtheorem{theorem}{Theorem}
\theoremstyle{definition}
\newtheorem{example}{Example}
\newtheorem{construction}{Construction}
\newtheorem{remark}{Remark}
\begin{document}
\linespread{1.2}

\title{Reducing The Sub-packetization Level of Optimal-Access Cooperative MSR Codes}

\author{Yaqian Zhang\IEEEauthorrefmark{1},~Jingke Xu\IEEEauthorrefmark{2}

\thanks{\IEEEauthorrefmark{1} Yaqian Zhang is with Ministry of Education Key Laboratory of Mathematics and Information Networks, School of Mathematical Sciences,
    Beijing University of Posts and Telecommunications, Beijing 100876, China, (e-mail: zhangyq@bupt.edu.cn).}
\thanks{\IEEEauthorrefmark{2} Jingke Xu is with School of Information Science and Engineering, Shandong Agricultural University, Tai'an 271018, China (e-mail: xujingke@sdau.edu.cn).}

}


\IEEEpubid{0000--0000~\copyright~2023 IEEE}

\maketitle

\begin{abstract}
 Cooperative MSR codes are a kind of storage codes which enable optimal-bandwidth repair of any $h\geq2$ node erasures in a cooperative way, while retaining the minimum storage as an $[n,k]$ MDS code. Each code coordinate (node) is assumed to store an array of $\ell$ symbols, where $\ell$ is termed as sub-packetization.
Large sub-packetization tends to induce high complexity, large input/output in practice.
  To address the disk IO capability, a cooperative MSR code is said to have optimal-access property, if during node repair, the amount of data accessed at each helper node meets a theoretical  lower bound. 
  
  In this paper, we focus on reducing the sub-packetization of optimal-access cooperative MSR codes with two erasures.
  At first, we design two crucial MDS array codes for repairing a specific repair pattern of two erasures with optimal access.
  Then, using the two codes as building blocks and by stacking up of the two codes for several times, we obtain an optimal-access cooperative MSR code with two erasures.
  The derived code has sub-packetization $\ell=r^{\binom{n}{2}-\lfloor\frac{n}{r}\rfloor(\binom{r}{2}-1)}$ where $r=n-k$, and it reduces $\ell$ by a fraction of $1/r^{\lfloor\frac{n}{r}\rfloor(\binom{r}{2}-1)}$ compared with  the state of the art ($\ell=r^{\binom{n}{2}}$). 

\end{abstract}

\begin{IEEEkeywords}
Distributed storage, cooperative MSR codes, optimal access, sub-packetization.
\end{IEEEkeywords}

\section{Introduction}
In distributed storage systems (DSS), data is stored across many storage nodes where node failures may occur frequently.
To protect data from node failures, erasure codes are used in DSS.
Typically, $k$ data blocks are encoded to $n$ blocks using an erasure code with each encoded block stored in one storage node.
If a node fails, it is repaired by downloading data from $d(\leq n-1)$ surviving nodes ({\it helper nodes}).
Two important metrics for the node repair efficiency are the total amount of data downloaded ({\it repair bandwidth}) and the volume of data accessed at the helper nodes, where the former indicates the network usage and the latter characterizes the disk I/O cost.
In \cite{Dimakis2011}, Dimakis et al. gave a tradeoff between the storage overhead and repair bandwidth, where codes with parameters lying on this tradeoff curve are called {\it regenerating codes}.
One extreme point of regenerating codes are minimum storage regenerating (MSR) codes, which achieve minimum storage overhead and have been extensively studied in the literature \cite{Ramchandran2011,Kumar2011,Vardy2016,Sasidharan2015,Ye2016,Ye2016sub-,Li-2018,Liu2022}.

MSR codes can only deal with single erasures, Lately cooperative MSR codes are defined in \cite{Hu2010} to repair $h\geq2$ erasures simultaneously through a cooperative repair model.
In the cooperative repair model, $h$ newcomers for repairing $h$ erasures download data from $d$ helper nodes and exchange data among themselves.
Specifically, suppose node $i$ stores a vector $\bm{c}_i\in F^{\ell}$ for $i\in[n]$, where $F$ is a finite field. Let $\mathcal{H}\subseteq[n]$ with $|\mathcal{H}|=h$ be the set of failed nodes.
For each $i\in\mathcal{H}$, let $\mathcal{R}_i\subseteq[n]\setminus\mathcal{H}$ with $|\mathcal{R}_i|=d$ denote the set of helper nodes connected by node $i$. Then the cooperative repair process includes the following two phases:
\begin{itemize}
\item \textbf{Download phase.} For each $i\in\mathcal{H}$ and $j\in\mathcal{R}_i$, node $i$ downloads $\beta_1$ symbols from helper node $j$ by accessing $\alpha_{i,j}$ coordinates of node $j$'s storage data $\bm{c}_j=(c_{j,0},\ldots,c_{j,\ell-1})$.
\item \textbf{Collaboration phase.} For each $i\in\mathcal{H}$ and $i'\in\mathcal{H}\setminus\{i\}$, node $i$ downloads $\beta_2$ symbols from node $i'$.
\end{itemize}
Then each node $i\in\mathcal{H}$ should be able to recover its erased data $\bm{c}_i$ using the data downloaded and exchanged in both repair phases.
The total repair bandwidth is defined as $\gamma=h(d\beta_1+(h-1)\beta_2)$ symbols, and the total amount of data accessed is $\gamma_a=\sum_{i\in\mathcal{H}}\sum_{j\in\mathcal{R}_i}\alpha_{i,j}$ symbols.

Indeed, cooperative MSR (also MSR) codes belong to a subclass of MDS codes, known as MDS array codes \cite{Blaum1998}.
An $(n,k,\ell)$ MDS array code over a finite field $F$ is formed by a set of vectors $(\bm{c}_1,\ldots,\bm{c}_n)$, where each $\bm{c}_i\in F^{\ell}$ $i\in[n]$ is a row vector of length $\ell$.
$\ell$ is called the sub-packetization level.
It satisfies that any $k$ coordinates $\bm{c}_i$ can be seen as information coordinates and can reconstruct the whole codeword (termed as MDS property).
Each coordinate $\bm{c}_i\in F^{\ell}$ is stored in one storage node $i$ for $i\in[n]$.
For MDS array codes, it is shown in \cite{Hu2010,Ye2018} that the repair bandwidth and the amount of accessed data for cooperatively repairing $h$ nodes using $d$ helper nodes are lower bounded by
\begin{equation}\label{bw-bound}
\gamma\geq\frac{(d+h-1)hl}{d-k+h}, \ \ \ \gamma_a\geq\frac{dhl}{d-k+h}.
\end{equation}
If for any $h$-subset $\mathcal{H}\subseteq[n]$ and any $d$-subset $\mathcal{R}_i\subseteq[n]\setminus\mathcal{H}$, $\gamma$ meets (\ref{bw-bound}) with equality, then the MDS array code is exactly a cooperative MSR code. Moreover, when both $\gamma$ and $\gamma_a$ achieve (\ref{bw-bound}) with equality, the code is said to satisfy the optimal-access property and called an optimal-access cooperative MSR code.

In previous works, cooperative MSR codes are presented in \cite{Ye2018,Zhang2020-scalar,Zhang2020,Ye2020,Liu2023,Zhang2025,Li2025}, wherein  \cite{Zhang2020,Li2025} satisfy the optimal-access property and \cite{Li2025} is a binary code constructed based on the code structure in \cite{Zhang2020}.
The codes in \cite{Zhang2020,Li2025} both have sub-packetization $\ell=(d-k+h)^{\binom{n}{h}}$ which is extraordinarily large due to practical consideration.
In this paper, we focus on reducing the sub-packetization for  optimal-access cooperative MSR codes with two erasure. The main contribution is that we propose a novel method to build such code  with sub-packetization $\ell=r^{\binom{n}{2}-\lfloor\frac{n}{r}\rfloor(\binom{r}{2}-1)}$, which reduces $\ell$ by a fraction of $1/r^{\lfloor\frac{n}{r}\rfloor(\binom{r}{2}-1)}$ compared with that in \cite{Zhang2020} for $h=2,d=n-2$.
Our main technique includes the following two folds:
(1) Design two crucial constructions of MDS array codes $\mathcal{C}_{\mathrm{I}}$ and $\mathcal{C}_{\mathrm{II}}$, which can cooperatively repair one specific repair pattern of two erasures with optimal access.
(2) Adopt the two codes $\mathcal{C}_{\mathrm{I}}$ and $\mathcal{C}_{\mathrm{II}}$ as building blocks, and construct the optimal-access cooperative MSR code by stacking up of  the two codes $\mathcal{C}_{\mathrm{I}}$ and $\mathcal{C}_{\mathrm{II}}$ for several times.

The remaining of the paper is organized as follows.
Section \ref{sec0-building-block} designs the two types of MDS array code building blocks $\mathcal{C}_{\mathrm{I}}$ and $\mathcal{C}_{\mathrm{II}}$.
Section \ref{sec1-two-nodes} presents the code construction of optimal-access cooperative MSR codes with smaller sub-packetization.
Section \ref{conclusion} concludes the paper.

\section{Two types of MDS array code building blocks}\label{sec0-building-block}
\subsection{Notations}
Throughout this paper, we use $[n]$ to denote the set of integers $\{1,2,...,n\}$ for a positive integer $n$, and denote $[i,j]=\{i,i+1,...,j\}$ for two integers $i<j$.
Let $F$ denote a finite field.
 Given an $(n,k,\ell)$ MDS array code $\mathcal{C}$ over $F$, for each codeword $\bm{c}\in\mathcal{C}$, we write $\bm{c}=(\bm c_1,...,\bm c_n)$ where $\bm c_i=(c_{i,0},c_{i,1},\ldots,c_{i,\ell-1})\in F^{\ell}$ for $i\in[n]$. Each coordinate $\bm{c}_i$ is called a node.
Note that the bold letters, suc as $\bm{c}, \bm{c}_i$, etc. always denote row vectors. $\bm{c}^{\tau}$ denotes the transpose of $\bm{c}$.
Let $I_{\ell}$ represent the identity matrix with order $\ell$.

We will define an $(n,k,\ell)$ MDS array code $\mathcal{C}$ by giving its parity-check matrix $H$.
Specifically, write
\begin{equation}\label{def1}
H=\begin{pmatrix}
H_{1,1} &H_{1,2}& \cdots &H_{1,n}\\
H_{2,1} &H_{2,2}& \cdots &H_{2,n}\\
\vdots&\vdots&\ddots&\vdots \\
H_{r,1} &H_{r,2}& \cdots &H_{r,n}
\end{pmatrix}
\end{equation}
where $r=n-k$ and $H_{t,i}$ is an $\ell\times \ell$ matrix over $F$ for $t\in[r]$ and $i\in[n]$.
That is, $\mathcal{C}=\{(\bm c_1,...,\bm c_n)\in(F^\ell)^n:~H\cdot(\bm c_1,...,\bm c_n)^{\tau}=\bm{0}\}$.
The MDS property of $\mathcal{C}$ indicates that any $r$  of the $n$ column blocks of $H$ form an invertible $r\ell\times r\ell$ matrix, equivalently, any
$k$ nodes are able to recover the whole codeword.
In the following, we always denote $r=n-k$.

\subsection{Two types of MDS array code building blocks}

We construct two types of $(n,k,\ell=r)$  MDS array codes  which later serve as building blocks for constructing cooperative MSR codes with two erasures. 
The first code can cooperatively repair nodes $\{1,2\}$, while the second can repair any two failed nodes among the first $r$ nodes, both utilizing the remaining $n-2$ helper nodes.

\begin{construction}[Type-I MDS array code building block $\mathcal{C}_{\mathrm{I}}$]\label{construction-1}
{\it
Let $F$ be a finite field with $|F|>n+r-2$, and $\lambda_1,\lambda_2,\ldots,\lambda_n,\gamma_1,\ldots,\gamma_{r-2}$ be $n+r-2$ distinct elements in $F$.
The Type-I $(n,k,\ell=r)$ MDS array code $\mathcal{C}_{\mathrm{I}}$ is defined by the parity-check matrix $H$ with the form in (\ref{def1}), where for $t\in[r]$, $H_{t,j}=\lambda_{j}^{t-1}I_{\ell}$ for $j\in[3,n]$ and $(H_{t,1},H_{t,2})$ is defined as:
\begin{equation}\label{pc-block1}
\left(\begin{array}{ccccc|ccccc}
\lambda_1^{t-1} & 0 & \gamma_1^{t-1} &\ldots & \gamma_{r-2}^{t-1} & \lambda_2^{t-1} &  &  &&  \\
  & \lambda_1^{t-1} &  && & 0 & \lambda_2^{t-1} & \gamma_1^{t-1} &\ldots & \gamma_{r-2}^{t-1} \\
  & &  \lambda_1^{t-1} && &  &  & \lambda_2^{t-1} &  &  \\
    & &  & \ddots & &  &  &  & \ddots &  \\
    & &  &  & \lambda_1^{t-1} &  &  &  & & \lambda_2^{t-1} \\
\end{array}\right),
\end{equation}
where the empty positions represent zeros.
}
\end{construction}

We illustrate the MDS property and cooperative repair property of $\mathcal{C}_{\mathrm{I}}$ as follows.

\begin{itemize}
\item[1)] {\it MDS property.}

The MDS property of the code $\mathcal{C}_{\mathrm{I}}$ is straightforward. First, by the last $r-2$ rows of the matrix $(H_{t,1},\ldots,H_{t,n}), t\in[r]$, one can see that the punctured code of $\mathcal{C}_{\mathrm{I}}$ by deleting the first two symbols $c_{i,0}, c_{i,1}$ from each coordinate $\bm{c}_i$ for $i\in[n]$, i.e., $\{(c_{i,2},\ldots,c_{i,\ell-1})_{1\leq i\leq n}: (c_{i,0},\ldots,c_{i,\ell-1})_{1\leq i\leq n}\in \mathcal{C}_{\mathrm{I}}\}$ forms an $(n,k,r-2)$ MDS array code. Then, according to the first two rows of $(H_{t,1},\ldots,H_{t,n}), t\in[r]$ and substituting the punctured code into it, one can obtain the MDS property of $\mathcal{C}_{\mathrm{I}}$.

\item[2)]  {\it Cooperative repair of $\{1,2\}$. }

Suppose node $1$ and node $2$ are erased.
By the first row of $(H_{t,1},\ldots,H_{t,n}), t\in[r]$, one can obtain the following parity check equations
$$
\sum_{j\in[n]}\lambda_{j}^{t-1}c_{j,0}+\gamma_1^{t-1}c_{1,2}+\cdots+\gamma_{r-2}^{t-1}c_{1,r-1}=0, ~~~~t\in[r].
$$
This implies that $(c_{1,0}, c_{1,2},\ldots,c_{1,r-1},c_{2,0},\ldots,c_{n,0})$ constitutes an $[n+r-2, n-2]$ generalized Reed-Solomon code (GRS) codeword.
Then node $1$ by downloading the symbol $c_{j,0}$ from each helper node $j\in[3,n]$, can obtain the data $c_{1,0}, c_{1,2}, \dots, c_{1,r-1}$ and $c_{2,0}$.
Similarly, based on the second row of $(H_{t,1},\ldots,H_{t,n}), t\in[r]$, node $2$ can obtain $c_{2,1},c_{2,2},\ldots,c_{2,r-1}$ and $c_{1,1}$ by downloading the helper node data $\{c_{j,1}: j\in[3,n]\}$.
Obviously, the repair can be done by one more exchange of the symbols $c_{2,0}$ and $c_{1,1}$ between the two nodes.

\end{itemize}

\begin{construction}[Type-II MDS array code building block $\mathcal{C}_{\mathrm{II}}$]\label{construction-2}
{\it
Let $F$ be a finite field with $|F|>n$, and $\lambda_1,\lambda_2,\ldots,\lambda_n,\tau\in F$ such that $\lambda_i, i\in[n]$ are all distinct and $\tau\neq 0, 1$.
The Type-II $(n,k,\ell=r)$ MDS array code $\mathcal{C}_{\mathrm{II}}$ is defined by the parity-check matrix $H$ with the form in (\ref{def1}), where for $t\in[r]$, $H_{t,j}=\lambda_{j}^{t-1}I_{\ell}$ for $j\in[r+1,n]$ and $(H_{t,1},\dots,H_{t,r})=$ 
\begin{equation}\label{pc-block2}
\begin{aligned}
&\left(\begin{array}{ccccc|ccccc|c|ccccc}
\lambda_1^{t-1} & \lambda_2^{t-1} & \lambda_3^{t-1} &\ldots & \lambda_{r}^{t-1} & \lambda_2^{t-1} &  &  && & &
\lambda_r^{t-1} &  & & &  \\
 & \lambda_1^{t-1} &  && & \tau\lambda_1^{t-1} & \lambda_2^{t-1} & \lambda_3^{t-1} &\ldots & \lambda_{r}^{t-1} & &
 & \lambda_r^{t-1} & & &  \\
 & &  \lambda_1^{t-1} && &  &  & \lambda_2^{t-1} &  & & \cdots
 & & & \lambda_{r}^{t-1} & &  \\
 & &  & \ddots & &  &  &  & \ddots &  &
 & & & & \ddots & \\
 & &  &  & \lambda_1^{t-1} &  &  &  & & \lambda_2^{t-1} &
 & \tau\lambda_1^{t-1} & \tau\lambda_2^{t-1} & \tau\lambda_3^{t-1} & \ldots & \lambda_r^{t-1}\\
\end{array}\right)
\end{aligned}
\end{equation}
where the empty positions in (\ref{pc-block2}) represent zeros.
And note that in (\ref{pc-block2}) the element $\tau$ only appears in the $(i,1),\ldots,(i,i-1)$-th entries of $H_{t,i}$ for $i\in[r]$.
}
\end{construction}
Next we prove the MDS property and cooperative repair property of $\mathcal{C}_{\mathrm{II}}$.
\begin{itemize}

\item[1)] {\it MDS property.}

It suffices to show that any $r$ column blocks of $H$ form an invertible matrix.
Suppose that the $r$ column block indices are $i_1,i_2,\ldots,i_r\in[n]$ and $i_1<i_2<\ldots<i_r$.
Suppose $i_1,\ldots,i_g\in[r]$ and $i_{g+1},\ldots,i_r\in[r+1,n]$ for some $0\leq g\leq r$.
Denote by $H(i_1,i_2,\ldots,i_r)$ the submatrix of $H$ consisting of the $r$ column blocks. We prove that for any $\bm{x}\in(F^{r})^r$, $H(i_1,i_2,\ldots,i_r)\cdot \bm{x}^{\tau}=\bm{0}$ always implies $\bm{x}=\bm{0}$. Denote $\bm{x}=(\bm{x}_1,\ldots,\bm{x}_r)$ and $\bm{x}_i=(x_{i,1},\ldots,x_{i,r})$ for $i\in[r]$.

For each $a\in[r]\setminus\{i_1,\ldots,i_g\}$, according to the $a$-th row of $(H_{t,i_1},\ldots,H_{t,i_r})\bm{x}^{\tau}=\bm{0}$, $t\in[r]$, one can obtain that
$
\sum_{j=1}^{r}\lambda_{i_j}^{t-1}x_{j,a}=0, t\in[r].
$
This implies $x_{j,a}=0$ for all $j\in[r], a\in[r]\setminus\{i_1,\ldots,i_g\}$.

Now consider symbols $x_{j,i_s}$, $j\in[r],s\in[g]$. For each $s\in[g]$, by the $i_s$-th row of $(H_{t,i_1},\ldots,H_{t,i_r})\bm{x}^{\tau}=\bm{0}$, $t\in[r]$ and noticing that $x_{s,a}=0$ for all $ a\in[r]\setminus\{i_1,\ldots,i_g\}$, then one can obtain the parity check equations in (\ref{eq-1-0}) as follows: for $t\in[r]$,
\begin{equation}\label{eq-1-0}
\resizebox{0.8\width}{!}{$
\begin{aligned}
&\sum_{1\leq j<s}\lambda_{i_{j}}^{t-1}(x_{j,i_s}\!+\!\tau x_{s,i_{j}})\!+\!\sum_{s<j\leq g}\lambda_{i_{j}}^{t-1}(x_{j,i_s}\!+\!x_{s,i_{j}})  \\
+&\lambda_{i_{s}}^{t-1}x_{s,i_s}+\sum_{j=g+1}^{r}\lambda_{i_{j}}^{t-1}x_{j,i_s} =0.
\end{aligned} $}
\end{equation}
Thus, the following set of $r$ symbols:
$\{x_{s,i_s}\}\cup\{x_{j,i_s}+\tau x_{s,i_{j}}\}_{1\leq j<s}\cup\{x_{j,i_s}+x_{s,i_{j}}\}_{s<j\leq g}\cup\{x_{j,i_s}\}_{j\in[g+1,r]}$
are solved to be zeros by (\ref{eq-1-0}).
That is, for  $s\in[g]$, one can directly compute $x_{s,i_s}=0$ and $x_{j,i_s}=0,j\in[g+1,r]$, which implies that ${\bm x}_j={\bf 0}$ for all $j\in[g+1,r]$.
As for the symbol sums, by considering equations labeled by $i_{s}$ and $i_{j}$, one has
\begin{equation}\label{eq-1-1}
\begin{cases}
x_{j,i_s}+\tau x_{s,i_{j}}=0 \\
x_{s,i_{j}}+ x_{j,i_{s}}=0
\end{cases}
\mathrm{if}~j<s;~
\begin{cases}
x_{s,i_j}+ x_{j,i_{s}}=0 \\
x_{s,i_{j}}+\tau x_{j,i_s}=0
\end{cases}
\mathrm{if}~j>s.
\end{equation}
Since $\tau\neq0,1$ and $x_{s,i_s}=0$, it has that $x_{s,i_{j}}=0$ for all $s,j\in [g]$. Hence, ${\bm x}_s={\bf 0}$ for all $s\in[g]$. 
This completes the proof.

\item[2)] {\it Cooperative repair of any $\{j_1,j_2\}\subseteq[r]$.}

W.L.O.G., suppose the two nodes $\{j_1,j_2\}=\{1,2\}$ are erased.
By the first row of the matrix $(H_{t,1},\ldots,H_{t,n})$, $t\in[r]$, one obtains that
$(c_{1,0}, c_{1,1}+c_{2,0},\ldots,c_{1,r-1}+c_{r,0},c_{r+1,0},\ldots,c_{n,0})$ forms an $[n,k]$ GRS codeword.
Thus node 1 by downloading the symbols $c_{j,0}$, $j\in[3,n]$, can recover $c_{1,1}+c_{2,0}, c_{1,0}, c_{1,2},\ldots,c_{1,r-1}$.
Similarly, by the second row of $(H_{t,1},\ldots,H_{t,n}), t\in[r]$,
it has $(c_{1,1}+\tau c_{2,0}, c_{2,1}, c_{2,2}+c_{3,1}\ldots,c_{2,r-1}+c_{r,1},c_{r+1,1},\ldots,c_{n,1})$ forms an $[n,k]$ GRS codeword.
By downloading $c_{j,1}, j\!\in\![3,n]$, node 2 computes $c_{1,1}\!+\!\tau c_{2,0}, c_{2,1},\ldots,c_{2,r-1}$.
Then one more exchange between the two nodes can complete the repair since $\tau\neq0,1$.
Generally, any two nodes $\{j_1,j_2\}\!\subseteq\![r]$ can be cooperatively repaired by considering the $j_1$-th and $j_2$-th rows of the parity check equations.
\end{itemize}

Actually, we emphasize that $\mathcal{C}_{\mathrm{II}}$ can cooperatively repair in total $\binom{r}{2}$ erasure patterns. This good property of $\mathcal{C}_{\mathrm{II}}$ plays an important role on the reduction of sub-packetization in our cooperative MSR code construction. We will explain it later in Remark \ref{remark}.
Next, based on the two codes $\mathcal{C}_{\mathrm{I}}$ and $\mathcal{C}_{\mathrm{II}}$, we are able to construct an optimal-access cooperative MSR codes with  two erasures.

\section{optimal-access cooperative MSR codes with two erasures}\label{sec1-two-nodes}

In this section, we construct an optimal-access cooperative MSR code $\mathcal{C}$ for repairing two erasures.
Denote $m=\binom{n}{2}-\lfloor\frac{n}{r}\rfloor(\binom{r}{2}-1)$. The code $\mathcal{C}$ is an $(n,k,\ell=r^m)$ MDS array code over a finite field $F$, which is constructed by stacking up of the two MDS array code building blocks $\mathcal{C}_{\mathrm{I}}$ and $\mathcal{C}_{\mathrm{II}}$ for several times.
Before giving the construction, some notations and definitions are needed.

\subsection{Notations and definitions}\label{sec1-notation}

\begin{itemize}
\item 
Denote $m=\binom{n}{2}-\lfloor\frac{n}{r}\rfloor(\binom{r}{2}-1)$ and $\ell=r^m$. For an integer $a\in[0,\ell-1]$, $a$ has the unique $r$-ary expansion $a=\sum_{i=1}^{m}a_{i}r^{i-1}$ with $a_i\in[0,r-1]$ for $i\in[m]$. Then we write $a=(a_1,a_2,\ldots,a_m)$ for simplicity.
For some $i\in[m]$ and $v\in[0,r-1]$, denote $a(i,v)=(a_1,\ldots,a_{i-1},v,a_{i+1},\ldots,a_m)$.
\item
The $n$ storage nodes are indexed from $1$ to $n$.
Let $\mathcal{P}=\{(j,j'): 1\leq j<j'\leq n\}$ represent the set of all $\binom{n}{2}$ two-node pairs.
Define $\mathcal{P}_i=\{(j,j'): (i-1)r+1\leq j<j'\leq ir\}$ for $i=1,2,\ldots,\lfloor\frac{n}{r}\rfloor$, then $|\mathcal{P}_i|=\binom{r}{2}$.
Moreover, these $\mathcal{P}_i$, $i\in[\lfloor\frac{n}{r}\rfloor]$ are disjoint subsets of $\mathcal{P}$. Define $\mathcal{P}_0=\mathcal{P}\setminus(\cup_{i\in[\lfloor\frac{n}{r}\rfloor]}\mathcal{P}_i)$, then $|\mathcal{P}_0|=\binom{n}{2}-\lfloor\frac{n}{r}\rfloor\binom{r}{2}=m-\lfloor\frac{n}{r}\rfloor$.
\item
Define $\pi$ to be a surjective map from $\mathcal{P}$ to the set $[m]$ satisfying the following conditions.
\begin{itemize}
\item[(1)] For $i\in[\lfloor\frac{n}{r}\rfloor]$, $\pi$ maps pairs in $\mathcal{P}_i$ to integer $i$. That is, $\pi(j,j')=i$ if $(j,j')\in\mathcal{P}_i$ for $i=1,2,\ldots,\lfloor\frac{n}{r}\rfloor$.
\item[(2)] $\pi$ maps pairs in $\mathcal{P}_0$ to integers in $[\lfloor\frac{n}{r}\rfloor+1,m]$ and it is a one to one mapping. This can be done since $|\mathcal{P}_0|=m-\lfloor\frac{n}{r}\rfloor$.
\end{itemize}
\item
For each $j\in[n]$, define $\Omega_{j,0}$ and $\Omega_{j,1}$ to be the following two subsets of $[\lfloor\frac{n}{r}\rfloor+1,m]$:
$$\begin{aligned}
& \Omega_{j,0}=\{\pi(j,j_2): j<j_2\leq n~\mathrm{and}~(j,j_2)\in\mathcal{P}_0\}, \\
&\Omega_{j,1}=\{\pi(j_1,j): 1\leq j_1<j~\mathrm{and}~(j_1,j)\in\mathcal{P}_0\}.
\end{aligned}$$
\end{itemize}

\subsection{Code construction}

\begin{construction}\label{construction-3}
{\it
Let $F$ be a finite field with $|F|>n+r-2$. Let $\lambda_1,\lambda_2,\ldots,\lambda_n,\gamma_1,\gamma_2,\ldots,\gamma_{r-2}$ be distinct elements in $F$ and $\tau\in F\setminus\{0,1\}$.
Denote $m=\binom{n}{2}-\lfloor\frac{n}{r}\rfloor(\binom{r}{2}-1)$.
The $(n,k,\ell=r^m)$ cooperative MSR code $\mathcal{C}$ is defined by the parity-check matrix $H$ with the form in (\ref{def1}), where for $t\in[r]$ and $j\in[n]$, $H_{t,j}$'s are defined in (\ref{pc-two-nodes}) and (\ref{pc-two-nodes2}) as follows
(we index the rows and columns of $H_{t,j}$ by integers from $0$ to $\ell-1$).
\begin{itemize}
\item For \bm{$1\leq j\leq\lfloor\frac{n}{r}\rfloor r$}: Denote $j=(u-1)r+v+1$ for some $u\in[\lfloor\frac{n}{r}\rfloor]$ and $v\in[0,r-1]$.
For $0\leq a,b\leq\ell-1$, the $(a,b)$-th entry of $H_{t,j}$ is defined as $H_{t,j}(a,b)=$
\begin{equation}\label{pc-two-nodes}
\begin{cases}
\lambda_j^{t-1}~ ~~~~~~~~~~~~~~~~\mathrm{if}~a\!=\!b,\\
\tau\lambda_{(u-1)r+v'+1}^{t-1}~~~~~\mathrm{if}~a_{u}\!=\!v,b\!=a(u,v')~\mathrm{for}~v'\in[0,v\!-\!1], \\
\lambda_{(u-1)r+v'+1}^{t-1}~~~~~~~\mathrm{if}~a_{u}\!=\!v,b\!=\!a(u,v')~\mathrm{for}~v'\in[v\!+\!1,r\!-\!1], \\
\gamma_{w-1}^{t-1}~~~~~~~~~~~~~~~~\mathrm{if}~a_{i}\!=\!0,b\!=\!a(i,w)~\mathrm{for}~i\!\in\!\Omega_{j,0}, w\!\in\![2,r\!-\!1], \\
\gamma_{w-1}^{t-1}~~~~~~~~~~~~~~~~\mathrm{if}~a_{i}\!=\!1,b\!=\!a(i,w)~\mathrm{for}~i\!\in\!\Omega_{j,1}, w\!\in\![2,r\!-\!1], \\
0 ~~~~~~~~~~~~~~~~~~~~~\mathrm{otherwise}.
\end{cases}
\end{equation}
\item For \bm{$\lfloor\frac{n}{r}\rfloor r<j\leq n$}:
For $0\leq a,b\leq\ell-1$, the $(a,b)$-th entry of $H_{t,j}$ is defined as $H_{t,j}(a,b)=$
\begin{equation}\label{pc-two-nodes2}
\begin{cases}
\lambda_j^{t-1}~ ~~~~~~~~~~\mathrm{if}~~a\!=\!b,\\
\gamma_{w-1}^{t-1}~~~~~~~~~~\mathrm{if}~~a_{i}\!=\!0~\mathrm{and}~b\!=\!a(i,w)~\mathrm{for}~i\!\in\!\Omega_{j,0}, w\!\in\![2,r\!-\!1], \\
\gamma_{w-1}^{t-1}~~~~~~~~~~\mathrm{if}~~a_{i}\!=\!1~\mathrm{and}~b\!=\!a(i,w)~\mathrm{for}~i\!\in\!\Omega_{j,1}, w\!\in\![2,r\!-\!1], \\
0 ~~~~~~~~~~~~~~~\mathrm{otherwise}.
\end{cases}
\end{equation}
\end{itemize}
}
\end{construction}

\begin{remark}\label{remark}
Recall that the code $\mathcal{C}_{\mathrm{I}}$ and $\mathcal{C}_{\mathrm{II}}$ in Section \ref{sec0-building-block} can indeed repair one and $\binom{r}{2}$ repair patterns respectively.
In order to construct $\mathcal{C}$  with cooperative repair of any two erasures, i.e. $\binom{n}{2}$ repair patterns, we extend the code $\mathcal{C}_{\mathrm{I}}$ and $\mathcal{C}_{\mathrm{II}}$ to multiple dimensions.
That is, the sub-packetization of  $\mathcal{C}$  is extended as $\ell=r^m$.
In particular, we extend $\mathcal{C}_{\mathrm{II}}$ to $\lfloor\frac{n}{r}\rfloor$ dimensions for repairing $\lfloor\frac{n}{r}\rfloor\binom{r}{2}$ erasure patterns in $\cup_{u\in[\lfloor\frac{n}{r}\rfloor]}\mathcal{P}_u$,
and extend $\mathcal{C}_{\mathrm{I}}$ to $m-\lfloor\frac{n}{r}\rfloor$ dimensions for repairing the remaining $\binom{n}{2}-\lfloor\frac{n}{r}\rfloor\binom{r}{2}$ erasure patterns in $\mathcal{P}_0$, where $\mathcal{P}_u$'s, $\mathcal{P}_0$ are defined in Subsection \ref{sec1-notation}.
To this end, 
we construct each $H_{t,j}$ in (\ref{pc-two-nodes}), (\ref{pc-two-nodes2}) from the diagonal matrix $\lambda_j^{t-1}I_{\ell}$ by successively adding some non-diagonal entries, i.e., the symbols  $\tau\lambda_{(u-1)r+v'+1}^{t-1}$'s, $\lambda_{(u-1)r+v'+1}^{t-1}$'s similar to the matrix (\ref{pc-block2}) and $\gamma_{w-1}^{t-1}$'s similar to the matrix (\ref{pc-block1}).
These non-diagonal entries are used to execute node repair.
Indeed,  
for erasure patterns $(i_1,i_2)\in\cup_{u\in[\lfloor\frac{n}{r}\rfloor]}\mathcal{P}_u$, denote $\pi(i_1,i_2)=u$ and write $i_j=(u-1)r+v_j+1$, for $j\in[2]$. 
Then parity check rows labeled by $\{a: a_u=v_j\}$ are used by node $i_j$ for repairing, just as in code $\mathcal{C}_{\mathrm{II}}$.
For erasure patterns $(i_1,i_2)\in\mathcal{P}_0$, denote $\pi(i_1,i_2)=\rho$. 
Then parity check rows labeled by $\{a: a_\rho=0\}$ and $\{a: a_\rho=1\}$  are respectively used by node $i_1$ and node $i_2$ for repairing, just as in code $\mathcal{C}_{\mathrm{I}}$.
\end{remark}

For a better understanding of the construction, we give an illustrating example.

\begin{example}\label{ex}
Let $n=7, k=4, d=5, h=2$. Then $r=3$, $m=17$ and $\ell=3^{17}$.
The cooperative MSR code $\mathcal{C}$ is a $(7,4,3^{17})$ MDS array code.
Let $F$ be a finite field with $|F|\geq 9$. Choose $\lambda_1,\lambda_2,\ldots,\lambda_7,\gamma_1$ to be $8$ distinct elements in $F$ and choose $\tau\in F\setminus\{0,1\}$.
Every integer $a\in[0,\ell-1]$ is represented by a ternary vector $(a_1,a_2,\ldots,a_{17})$. The surjective map $\pi$ is displayed in Table \ref{tab1}.
 \begin{table}[ht]\label{tab1}
  \renewcommand{\arraystretch}{1.4}
	\centering
\caption{The surjective map $\pi$ on $\mathcal{P}=\mathcal{P}_0\cup\mathcal{P}_1\cup\mathcal{P}_2$}
\setlength{\tabcolsep}{1mm}{
\begin{tabular}{l|lllll}
\hline
$\pi$ on $\mathcal{P}_1$& $\pi(1,2)\!=\!1$& $\pi(1,3)\!=\!1$ & $\pi(2,3)\!=\!1$ & & \\ [0.8ex] \hline
$\pi$ on $\mathcal{P}_2$& $\pi(4,5)\!=\!2$ & $\pi(4,6)\!=\!2$ &$\pi(5,6)\!=\!2$ & & \\ [0.8ex] \hline
\multirow{3}{*}{$\pi$ on $\mathcal{P}_0$}&$\pi(1,4)\!=\!3$& $\pi(1,5)\!=\!6$  &$\pi(1,6)\!=\!9$ & $\pi(1,7)\!=\!12$ & $\pi(4,7)\!=\!15$ \\
 &$\pi(2,4)\!=\!4$ & $\pi(2,5)\!=\!7$ & $\pi(2,6)\!=\!10$ & $\pi(2,7)\!=\!13$ & $\pi(5,7)\!=\!16$ \\
 &$\pi(3,4)\!=\!5$  &$\pi(3,5)\!=\!8$ &$\pi(3,6)\!=\!11$ &$\pi(3,7)\!=\!14$ &$\pi(6,7)\!=\!17$
\\ \hline
\end{tabular}}
\end{table}

Take $j=2$ for example. Since $1\leq j=2\leq \lfloor\frac{n}{r}\rfloor r=6$, then according to (\ref{pc-two-nodes}), we construct $H_{t,2}$ by the following steps:
\begin{itemize}
\item Begin with the diagonal matrix $\lambda_2^{t-1}I_{3^{17}}$;
\item For $a\in[0,3^{17}-1]$ with $a_1=1$, set the $(a,a(1,0))$-th entry to be $\tau\lambda_{1}^{t-1}$, and set the $(a,a(1,2))$-th entry to be $\lambda_{3}^{t-1}$;
\item Since $\Omega_{2,0}=\{4,7,10,13\}$ and $\Omega_{2,1}=\emptyset$, then for $a\in[0,3^{17}-1]$ with $a_4=0$ (or $a_7=0$ or $a_{10}=0$ or $a_{13}=0$), set the $(a,a(4,2))$-th (or $(a,a(7,2))$-th or $(a,a(10,2))$-th or $(a,a(13,2))$-th) entry to be $\gamma_{1}^{t-1}$.
\end{itemize}
Note that for each $a\in[0,\ell-1]$, the $a$-th row of $H_{t,2}$ may have several non-diagonal entries being $\tau\lambda_{1}^{t-1}$, $\lambda_{3}^{t-1}$ or $\gamma_{1}^{t-1}$, and these non-diagonal entries added at each step all lie in different columns.

\end{example}

In the following, we show the MDS property and optimal-access property of the code $\mathcal{C}$ in Construction \ref{construction-3} in Subsection \ref{mds-h2} and Subsection \ref{repair-h2}, respectively.

\subsection{MDS property}\label{mds-h2}
We prove the MDS property of the code $\mathcal{C}$ in Construction \ref{construction-3}. 
For convenience, for every integer $a\in[0,\ell-1]$, define $w_{\rm{suf}}(a)$ to be the number of digits in $(a_{\lfloor\frac{n}{r}\rfloor+1}, \ldots, a_m)$ being $0$ or $1$.
That is, $w_{\rm{suf}}(a)=|\{u\in[\lfloor\frac{n}{r}\rfloor+1,m]: a_u=0~{\rm{or}}~1\}|$.
For $0\leq s\leq m-\lfloor\frac{n}{r}\rfloor$, define $\mathcal{L}_s=\{a\in[0,\ell-1]: w_{\rm{suf}}(a)=s\}$.
Then $\mathcal{L}_0, \mathcal{L}_1,\ldots,\mathcal{L}_{m-\lfloor\frac{n}{r}\rfloor}$ form a partition of $[0,\ell-1]$.
The definition of $\mathcal{L}_s$ will be used in the proof of both MDS property and optimal access property of $\mathcal{C}$. Next in Theorem \ref{thm-mds} we firstly prove the MDS property of $\mathcal{C}$.

\begin{theorem}\label{thm-mds}
The code $\mathcal{C}$ in Construction \ref{construction-3} has MDS property.
\end{theorem}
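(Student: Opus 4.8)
The plan is to show that any $r$ column blocks of $H$ form an invertible $r\ell\times r\ell$ matrix, or equivalently that $H(i_1,\ldots,i_r)\cdot\bm{x}^{\tau}=\bm{0}$ forces $\bm{x}=\bm{0}$, where $i_1<\cdots<i_r$ and $\bm{x}=(\bm{x}_{i_1},\ldots,\bm{x}_{i_r})$ with each $\bm{x}_{i_s}\in F^{\ell}$ having coordinates $x_{i_s,a}$ indexed by $a\in[0,\ell-1]$. Since each $H_{t,j}$ differs from $\lambda_j^{t-1}I_{\ell}$ only by off-diagonal entries that move coordinate $b$ to coordinate $a$ along a single-digit change in the $r$-ary expansion, the system decomposes row-index-by-row-index into small GRS-type subsystems, exactly as in the analyses of $\mathcal{C}_{\mathrm{I}}$ and $\mathcal{C}_{\mathrm{II}}$. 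The key is to order the row indices $a\in[0,\ell-1]$ cleverly — I would induct on $w_{\rm{suf}}(a)$, i.e., process the layers $\mathcal{L}_0,\mathcal{L}_1,\ldots,\mathcal{L}_{m-\lfloor n/r\rfloor}$ in increasing order of $s$ — so that when we examine the equations for a given $a$, all the "higher" coordinates $x_{\cdot,b}$ appearing with $\gamma$-type coefficients have already been shown to vanish.

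More concretely, fix a row-index $a$ and write down, for each $t\in[r]$, the equation obtained from the $a$-th row of $(H_{t,i_1},\ldots,H_{t,i_r})\bm{x}^{\tau}=\bm{0}$. The off-diagonal contributions are of two kinds: (i) the $\gamma_{w-1}^{t-1}$ terms, which (for $j$ a block index in our chosen set) couple $x_{j,a}$ to $x_{j,a(i,w)}$ for $i\in\Omega_{j,0}\cup\Omega_{j,1}$ and $w\in[2,r-1]$ — crucially these all have strictly larger $w_{\rm{suf}}$, hence vanish by the induction hypothesis; and (ii) the $\lambda$- and $\tau\lambda$-type terms, which only appear when the block index $j\le\lfloor n/r\rfloor r$ and $a_u$ equals the "special" digit $v$ with $j=(u-1)r+v+1$. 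After killing the type-(i) terms, what remains is structurally identical to the MDS equations already solved for $\mathcal{C}_{\mathrm{II}}$ (when several chosen blocks lie in the same length-$r$ window and share the coordinate $u$) and $\mathcal{C}_{\mathrm{I}}$ / plain GRS (otherwise): a Vandermonde system in $\lambda_{i_1},\ldots,\lambda_{i_r}$ whose unknowns are either single symbols $x_{j,a}$ or the paired sums $x_{j,a}+\tau x_{s,a(\cdot)}$, $x_{j,a}+x_{s,a(\cdot)}$ arising from the $\tau$ entries. Invertibility of the $r\times r$ Vandermonde (distinct $\lambda$'s) forces each such unknown to be zero, and then the $\tau\neq0,1$ argument of Construction \ref{construction-2}, applied coordinate-by-coordinate, disentangles the paired sums to give $x_{j,a}=0$ for every chosen block $j$. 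Carrying the induction through all layers yields $\bm{x}=\bm{0}$.

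The main obstacle is bookkeeping: one must verify that the partial order "increasing $w_{\rm{suf}}$" is genuinely respected by every off-diagonal move in (\ref{pc-two-nodes})–(\ref{pc-two-nodes2}) — i.e. that a $\gamma$-entry in row $a$ always points to a column $a(i,w)$ with $w\in[2,r-1]$, so that digit $i$ changes from $0$ or $1$ to something that is \emph{not} $0$ or $1$ (hence $w_{\rm{suf}}$ strictly decreases along the move, so the target has already been handled when we process $a$), and that the $\tau\lambda$/$\lambda$ moves within a window keep $w_{\rm{suf}}$ unchanged so that the $\mathcal{C}_{\mathrm{II}}$-style pairing stays inside a single layer. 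A second delicate point is the interaction when two chosen block indices $i_s,i_{s'}$ both lie in $[1,\lfloor n/r\rfloor r]$ but in \emph{different} windows $u\neq u'$: then for a given $a$ at most one of the two "special-digit" conditions $a_u=v_s$, $a_{u'}=v_{s'}$ is relevant to the coupling, and one should check that no unwanted cross terms appear, reducing the situation to the already-handled $\mathcal{C}_{\mathrm{II}}$ (single window) or GRS (no window) cases. Once these combinatorial compatibilities are confirmed, the algebra is exactly the Vandermonde-plus-$\tau$ argument already carried out for the building blocks.
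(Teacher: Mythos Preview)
Your outer induction on $w_{\rm suf}(a)$ is exactly right, and your check that every $\gamma$-entry sends row $a$ to a column with strictly smaller $w_{\rm suf}$ while every $\lambda/\tau\lambda$-entry preserves $w_{\rm suf}$ is correct and matches the paper. The gap is in your ``second delicate point.'' Your claim that for blocks in different windows ``at most one of the two special-digit conditions $a_u=v_s$, $a_{u'}=v_{s'}$ is relevant'' is false: both can hold simultaneously. Concretely, take $r=3$, chosen blocks $i_1=1\in$ window~$1$, $i_2=4\in$ window~$2$, $i_3=7$, and any $a$ with $a_1=0$ and $a_2=0$. Then the $a$-th row equation contains, besides the three diagonal terms, the off-diagonal terms $\lambda_2^{t-1}x_{1,a(1,1)}+\lambda_3^{t-1}x_{1,a(1,2)}$ from block~$1$ \emph{and} $\lambda_5^{t-1}x_{4,a(2,1)}+\lambda_6^{t-1}x_{4,a(2,2)}$ from block~$4$. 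That is seven evaluation points against only $r=3$ Vandermonde equations, so your single Vandermonde step cannot conclude.

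The missing ingredient is a second, inner induction. The paper defines $w(a)=|\{z:a_{u_z}\in V_z\}|$, where $u_1,\dots,u_e$ are the windows hit by the chosen blocks and $V_z$ are the corresponding digit sets, and sets $\Lambda_b=\{a:w(a)=b\}$. Within each layer $\mathcal{L}_s$ one then inducts on $b$: for $a\in\mathcal{L}_s\cap\Lambda_b$, every $\lambda/\tau\lambda$-target $a(u_z,v')$ with $v'\notin V_z$ lies in $\Lambda_{b-1}$ and hence has already been killed, while the targets with $v'\in V_z$ stay in $\Lambda_b$ and pair up exactly as in the $\mathcal{C}_{\mathrm{II}}$ argument (equation~(\ref{eq-1-1})). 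After this reduction the surviving equation really does have only $r$ evaluation points, one per chosen block, and your Vandermonde-plus-$\tau$ disentangling goes through. So your sketch becomes correct once you replace the single induction on $w_{\rm suf}$ by the lexicographic double induction on $(w_{\rm suf}(a),w(a))$.
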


\begin{proof}
It suffices to prove that any $r$ column blocks $i_1,i_2,\ldots,i_r$ with $i_1<i_2<\cdots<i_r$ of $H$, denoted by $H(i_1,\ldots,i_r)$, forms an invertible matrix. Equivalently, we prove that for any $\bm{x}=(\bm{x}_1,\ldots,\bm{x}_r)$ with $\bm{x}_i=(x_{i,0},\ldots,x_{i,\ell-1})\in F^{\ell}$ for $i\in[r]$,  it has $H(i_1,\ldots,i_r)\cdot\bm{x}^{\tau}=\bm{0}$ always implies $\bm{x}=\bm{0}$.

Note that for each $1\leq i\leq r\lfloor\frac{n}{r}\rfloor$, we can write $i=(u-1)r+v+1$ for some $u\in[\lfloor\frac{n}{r}\rfloor]$ and $v\in[0,r-1]$. Denote $i=(u,v)$ for simplicity.
That is, the first $r\lfloor\frac{n}{r}\rfloor$ nodes are partitioned into $\lfloor\frac{n}{r}\rfloor$ groups, and node $i$ locates in the $(v+1)$-th node of the $u$-th group.
Then, for each $i\in\{i_1,i_2,\ldots,i_r\}$ with $i\leq r\lfloor\frac{n}{r}\rfloor$, $i$ can be written as $i=(u,v)$. Suppose these nodes locate in the groups $U=\{u_1,\ldots, u_e\}$ for some $0\leq e\leq r$. 
If $U\neq \emptyset$,
for each $j\in[e]$, denote $V_j=\{v\in[0,r-1]: (u_j,v)\in\{i_1,i_2,\ldots,i_r\}~{\rm{with}}~(u_j,v)\leq r\lfloor\frac{n}{r}\rfloor\}$.
Then it has $\cup_{j\in[e]}\{(u_j,v): v\in V_j\}=\{i\in\{i_1,i_2,\ldots,i_r\}: i\leq r\lfloor\frac{n}{r}\rfloor\}$.
Moreover, for each $a\in[0,\ell-1]$, define $w(a)=|\{i\in[e]: a_{u_i}\in V_i\}|$.
For each $0\leq j\leq e$, define $\Lambda_j=\{a\in[0,\ell-1]: w(a)=j\}$.
Then $\Lambda_0,\Lambda_1,\ldots,\Lambda_e$ form a partition of $[0,\ell-1]$. 
Recall the definition of $\mathcal{L}_s$ before Theorem \ref{thm-mds}, it has $\mathcal{L}_0, \ldots,\mathcal{L}_{m-\lfloor\frac{n}{r}\rfloor}$ form a partition of $[0,\ell-1]$. Also, for each $s\in[0,m-\lfloor\frac{n}{r}\rfloor]$, it has  that $\mathcal{L}_s$ has a partition $\mathcal{L}_s\cap \Lambda_j$ for $j=0,1,\ldots, e$.
Next we prove by induction on $s$ that for each $s\in[0,m-\lfloor\frac{n}{r}\rfloor]$, it has $\{x_{i,a}\}_{i\in[r],a\in\mathcal{L}_s}$ are all zeros.

At first, let $s=0$, we prove $\{x_{i,a}\}_{i\in[r],a\in\mathcal{L}_0}$ are all zeros.
If $U=\emptyset$, then it has $i_r>\cdots>i_1>r\lfloor\frac{n}{r}\rfloor$. 
For each $a\in \mathcal{L}_0$, by the $a$-th row of $(H_{t,i_1},\ldots,H_{t,i_r})\bm{x}^{\tau}=\bm{0}$, $t\in[r]$, it has $\sum_{j\in[r]}\lambda_{i_j}^{t-1}x_{j,a}=0$, $t\in[r]$, which implies $x_{j,a}=0$ for $j\in[r]$. Thus $\{x_{i,a}\}_{i\in[r],a\in\mathcal{L}_0}$ are all zeros.
If $U\neq\emptyset$ and $\Lambda_0=\emptyset$, 
then it must have that the $r$ nodes $\{i_1,i_2,\ldots,i_r\}$ belong to the same group $u$. For each $a\in\mathcal{L}_0$, fix the digits $a_{u'}$ with $u'\in[m]\setminus\{u\}$, consider the equations $(H_{t,i_1},\ldots,H_{t,i_r})\bm{x}^{\tau}=\bm{0}$, $t\in[r]$ labeled by $a(u,0), \ldots, a(u,r-1)$.
Then it has the same structure as that of code $\mathcal{C}_{\mathrm{II}}$ in Construction \ref{construction-2}.
In a similar way as in the proof of MDS property of $\mathcal{C}_{\mathrm{II}}$ (by letting $g=r$), one can obtain that $x_{i,a}=0$ for $i\in[r]$. 
Thus $\{x_{i,a}\}_{i\in[r],a\in\mathcal{L}_0}$ are all zeros.
If $U\neq\emptyset$ and $\Lambda_0\neq\emptyset$, 
we prove $\{x_{i,a}\}_{i\in[r],a\in\mathcal{L}_0\cap\Lambda_b}$ for all $b\in[0,e]$ are all zeros.
Firstly consider each $a\in\mathcal{L}_0\cap\Lambda_0$, according to the $a$-th equations of $(H_{t,i_1},\ldots,H_{t,i_r})\bm{x}^{\tau}=\bm{0}$, $t\in[r]$, it has $\sum_{j\in[r]}\lambda_{i_j}^{t-1}x_{j,a}=0$, $t\in[r]$, which implies $x_{j,a}=0$ for $j\in[r]$. Thus $\{x_{i,a}\}_{i\in[r],a\in\mathcal{L}_0\cap\Lambda_0}$ are all zeros.
Suppose for all $0\leq b'<b$, we have proved $\{x_{i,a}\}_{i\in[r],a\in\mathcal{L}_0\cap\Lambda_{b'}}$ are all zeros.
Now we prove $\{x_{i,a}\}_{i\in[r],a\in\mathcal{L}_0\cap\Lambda_{b}}$ are all zeros.
For each $a\in\mathcal{L}_0\cap\Lambda_{b}$, it has $a_u\in[2,r-1]$ for all $u\in[\lfloor\frac{n}{r}\rfloor+1,m]$, and 
there are exactly $b\leq e$ numbers in $U=\{u_1,\ldots, u_e\}$, say $u_1,u_2,\ldots,u_b$ without loss of generality, s.t. $a_{u_1}\in V_1, \ldots, a_{u_b}\in V_b$.
For convenience, here we denote $a_{u_1}=v_1\in V_1, \ldots, a_{u_b}=v_b\in V_b$, and denote $\bm{x}=(\bm{x_{i_1}},\ldots,\bm{x_{i_r}})$ if there is no ambiguity.
By the $a$-th equations of $(H_{t,i_1},\ldots,H_{t,i_r})\bm{x}^{\tau}=\bm{0}$, $t\in[r]$, it has

\begin{equation}\label{mds-eq1}
\begin{aligned}
&\sum_{z\in[b]}\Big\{\lambda_{(u_z-1)r+v_z+1}^{t-1}x_{(u_z-1)r+v_z+1, a}
+\sum_{ v\in V_z\setminus\{v_z\}}\lambda_{(u_z-1)r+v+1}^{t-1}(f(v_z,v) x_{(u_z-1)r+v_z+1, a(u_z,v)}+x_{(u_z-1)r+v+1, a})     \\
+&
\sum_{v\in[0,r-1]\setminus V_z} \lambda_{(u_z-1)r+v+1}^{t-1}\underline{x_{(u_z-1)r+v_z+1, a(u_z,v)}}f(v_z,v)  \Big\}   
+\sum_{\makecell[c]{\scriptsize j\in \{i_1,\ldots,i_r\}\\j>r\lfloor\frac{n}{r}\rfloor}}\lambda_j^{t-1}x_{j,a}=0,~~~~~~t\in[r],
\end{aligned}
\end{equation}
where $f(v_z,v)=1$ if $v_z<v$ and $f(v_z,v)=\tau$ if $v_z>v$. Note that $a(u_z,v)\in \Lambda_{b-1}$ for $v\in[0,r-1]\setminus V_z$, thus the underlined symbols in \eqref{mds-eq1} are all zeros by the hypothesis. 
Thus from \eqref{mds-eq1} one can obtain that the symbols
\begin{equation}\label{mds-eq2}
\begin{aligned}
&\{x_{(u_z-1)r+v_z+1, a}\}_{z\in[b]}
\cup\{\tau x_{(u_z-1)r+v_z+1, a(u_z,v)}+x_{(u_z-1)r+v+1, a}\}_{z\in[b],v\in V_z\setminus\{v_z\},v<v_z} \\
\cup&\{x_{(u_z-1)r+v_z+1, a(u_z,v)}+x_{(u_z-1)r+v+1, a}\}_{z\in[b],v\in V_z\setminus\{v_z\},v>v_z}
\cup\{x_{j,a}\}_{j\in \{i_1,\ldots,i_r\},j>r\lfloor\frac{n}{r}\rfloor}
\end{aligned}
\end{equation}
are all zeros.
Fix some $z\in[b]$, for each $v\in V_z\setminus\{v_z\}$, consider the $a'=a(u_z,v)$-th row of the equations  $(H_{t,i_1},\ldots,H_{t,i_r})\bm{x}^{\tau}=\bm{0}$, $t\in[r]$, one can similarlly solve out some zero symbols as in \eqref{mds-eq2}. Similar as in \eqref{eq-1-1}, using the zero symbol sums, one can compute that 
$x_{(u_z-1)r+v+1, a}=0$ for $v\in V_z\setminus\{v_z\}$. When $z$ run over all $z\in[b]$, one can obtain that $\{x_{(u_z-1)r+v+1, a}\}_{z\in[b],v\in V_z\setminus\{v_z\}}$ are all zeros.
Thus, replacing $\bm{x}=(\bm{x_{i_1}},\ldots,\bm{x_{i_r}})$ with $\bm{x}=(\bm{x_{1}},\ldots,\bm{x_{r}})$, one can obtain $\{x_{i,a}\}_{i\in[r],a\in\mathcal{L}_0\cap\Lambda_{b}}$ are all zeros.
Therefore, we have that $\{x_{i,a}\}_{i\in[r],a\in\mathcal{L}_0}$ are all zeros.

Now suppose that for all $0\leq s'<s$, we have proved the symbols $\{x_{i,a}\}_{i\in[r],a\in\mathcal{L}_{s'}}$ are all zeros.
Next we prove that $\{x_{i,a}\}_{i\in[r],a\in\mathcal{L}_{s}}$ are all zeros.

If $U=\emptyset$, then $i_r>\cdots>i_1>r\lfloor\frac{n}{r}\rfloor$. 
For each $a\in \mathcal{L}_s$, then $a$ has exactly $s$ digits in $(a_{\lfloor\frac{n}{r}\rfloor+1},\ldots, a_m)$ being $0$ or $1$, say $a_{\sigma_1},\ldots,a_{\sigma_s}$.
By the $a$-th row of $(H_{t,i_1},\ldots,H_{t,i_r})\bm{x}^{\tau}=\bm{0}$, $t\in[r]$, it has 
\begin{equation}\label{mds-eq3}
\sum_{j\in[r]}\lambda_{i_j}^{t-1}x_{j,a}
+\sum_{j\in[r]}\sum_{z\in[s]}\chi(a,\sigma_z,i_j)\sum_{w\in[2,r-1]}\gamma_{w-1}^{t-1}x_{j,a(\sigma_z,w)}
=0, ~~~~t\in[r],
\end{equation}
where 
$\chi(a,\sigma_z,i_j)=1$ if $a_{\sigma_z}=0,\sigma_z\in\Omega_{i_j,0}$ or $a_{\sigma_z}=1,\sigma_z\in\Omega_{i_j,1}$, otherwise $\chi(a,\sigma_z,i_j)=0$, here $\Omega_{i_j,0}, \Omega_{i_j,1}$ are defined in Subsection \ref{sec1-notation}.
Note that for $z\in[s], w\in[2,r-1]$, it has $a(\sigma_z,w)\in \mathcal{L}_{s-1}$, then $x_{j,a(\sigma_z,w)}$ for $j\in[r]$ are all zeros by the hypothesis.
Thus from \eqref{mds-eq3} one can compute that $x_{j,a}=0$, $j\in[r]$.

If $U\neq\emptyset$ and $\Lambda_0=\emptyset$, 
then $\{i_1,i_2,\ldots,i_r\}$ belong to the same group $u$. For each $a\in\mathcal{L}_s$, denote $a_u=j-1\in[0,r-1]$, and $a$ has exactly $s$ digits in $(a_{\lfloor\frac{n}{r}\rfloor+1},\ldots, a_m)$ being $0$ or $1$, say $a_{\sigma_1},\ldots,a_{\sigma_s}$.
By the $a$-th row of $(H_{t,i_1},\ldots,H_{t,i_r})\bm{x}^{\tau}=\bm{0}$, $t\in[r]$, it has 
\begin{equation}\label{mds-eq4}
\begin{aligned}
&\lambda_{i_j}^{t-1}x_{j,a}+\sum_{1\leq j'<j}\lambda_{i_{j'}}^{t-1}(x_{j',a}\!+\!\tau x_{j,a(u,j'-1)})+\sum_{j<j'\leq r}\lambda_{i_{j'}}^{t-1}(x_{j',a}+x_{j,a(u,j'-1)})   \\
+&\sum_{j\in[r]}\sum_{z\in[s]}\chi(a,\sigma_z,i_j)\sum_{w\in[2,r-1]}\gamma_{w-1}^{t-1}x_{j,a(\sigma_z,w)}
=0,  ~~~~~t\in[r].
\end{aligned} 
\end{equation}
where $\chi(a,\sigma_z,i_j)$ is defined in \eqref{mds-eq3}. 
Since $x_{j,a(\sigma_z,w)}=0$ for $j\in[r],z\in[s],w\in[2,r-1]$ by the hypothesis, then similarly as in the proof of MDS property of $\mathcal{C}_{\mathrm{II}}$ (by letting $g=r$), one can obtain $x_{i,a}=0$ for $i\in[r]$. Thus $\{x_{i,a}\}_{i\in[r],a\in\mathcal{L}_s}$ are all zeros.

If $U\neq\emptyset$ and $\Lambda_0\neq\emptyset$, 
we prove $\{x_{i,a}\}_{i\in[r],a\in\mathcal{L}_s\cap\Lambda_b}$ for all $b\in[0,e]$ are all zeros.
At first, for each $a\in\mathcal{L}_s\cap\Lambda_0$, $a$ has exactly $s$ digits in $(a_{\lfloor\frac{n}{r}\rfloor+1},\ldots, a_m)$ being $0$ or $1$, say $a_{\sigma_1},\ldots,a_{\sigma_s}$.
According to the $a$-th equations of $(H_{t,i_1},\ldots,H_{t,i_r})\bm{x}^{\tau}=\bm{0}$, $t\in[r]$, it has 
$$
\sum_{j\in[r]}\lambda_{i_j}^{t-1}x_{j,a}
+\sum_{j\in[r]}\sum_{z\in[s]}\chi(a,\sigma_z,i_j)\sum_{w\in[2,r-1]}\gamma_{w-1}^{t-1}x_{j,a(\sigma_z,w)}=0, ~~t\in[r],
$$ 
where $\chi(a,\sigma_z,i_j)$ is defined in \eqref{mds-eq3}. 
Then one can compute $x_{j,a}=0$ for $j\in[r]$, since $x_{j,a(\sigma_z,w)}=0$ for $j\in[r],z\in[s],w\in[2,r-1]$ by the hypothesis. Thus $\{x_{i,a}\}_{i\in[r],a\in\mathcal{L}_s\cap\Lambda_0}$ are all zeros.
Suppose for all $0\leq b'<b$, we have proved $\{x_{i,a}\}_{i\in[r],a\in\mathcal{L}_s\cap\Lambda_{b'}}$ are all zeros.
Now we prove $\{x_{i,a}\}_{i\in[r],a\in\mathcal{L}_s\cap\Lambda_{b}}$ are all zeros.
For each $a\in\mathcal{L}_s\cap\Lambda_{b}$, then $a$ has exactly $s$ digits in $(a_{\lfloor\frac{n}{r}\rfloor+1},\ldots, a_m)$ being $0$ or $1$, say $a_{\sigma_1},\ldots,a_{\sigma_s}$, and 
there are exactly $b\leq e$ numbers in $U=\{u_1,\ldots, u_e\}$, say $u_1,u_2,\ldots,u_b$ without loss of generality, s.t. $a_{u_1}\in V_1, \ldots, a_{u_b}\in V_b$.
For convenience, denote $a_{u_1}=v_1, \ldots, a_{u_b}=v_b$, and $\bm{x}=(\bm{x_{i_1}},\ldots,\bm{x_{i_r}})$.
By the $a$-th equations of $(H_{t,i_1},\ldots,H_{t,i_r})\bm{x}^{\tau}=\bm{0}$, $t\in[r]$, it has
\begin{equation}\label{mds-eq5}
\begin{aligned}
&\sum_{z\in[b]}\Big\{\lambda_{(u_z-1)r+v_z+1}^{t-1}x_{(u_z-1)r+v_z+1, a}
+\sum_{ v\in V_z\setminus\{v_z\}}\lambda_{(u_z-1)r+v+1}^{t-1}(f(v_z,v) x_{(u_z-1)r+v_z+1, a(u_z,v)}+x_{(u_z-1)r+v+1, a})     \\
+&
\sum_{v\in[0,r-1]\setminus V_z} \lambda_{(u_z-1)r+v+1}^{t-1}x_{(u_z-1)r+v_z+1, a(u_z,v)}f(v_z,v)  \Big\}   
+\sum_{\makecell[c]{\scriptsize j\in \{i_1,\ldots,i_r\}\\j>r\lfloor\frac{n}{r}\rfloor}}\lambda_j^{t-1}x_{j,a}   \\
+&\sum_{j\in[r]}\sum_{z\in[s]}\chi(a,\sigma_z,i_j)\sum_{w\in[2,r-1]}\gamma_{w-1}^{t-1}x_{j,a(\sigma_z,w)}
=0,~~~~~t\in[r],
\end{aligned}
\end{equation}
where $f(v_z,v)$ is defined in \eqref{mds-eq1} and $\chi(a,\sigma_z,i_j)$ is defined in \eqref{mds-eq3}. Note that $x_{j,a(\sigma_z,w)}=0$ for $j\in[r],z\in[s],w\in[2,r-1]$ by the hypothesis. Then in a similar way as in the case of $s=0$, one can compute that $\{x_{i,a}\}_{i\in[r],a\in\mathcal{L}_s}$ are all zeros.
This completes the proof.

\end{proof}

\subsection{Optimal-access property}\label{repair-h2}
We show the optimal access property of $\mathcal{C}$ in Construction \ref{construction-3}.
For some $u\in[m]$ and $v\in[0,r-1]$, define $A(u,v)=\{a\in[0,\ell-1]: a_u=v\}$.
Suppose the two nodes $\{i_1,i_2\}$ with $i_1<i_2$ are erased, and 
recall the definition $\mathcal{P}$ and map $\pi$ in Subsection \ref{sec1-notation}.
As illustrated in Remark \ref{remark},
If the erased node pair $(i_1,i_2)\in\mathcal{P}_u$ for some $u\in[\lfloor\frac{n}{r}\rfloor]$, then $\pi(i_1,i_2)=u$. Write $i_j=(u-1)r+v_j+1$ for $j\in[2]$, we use parity check equations with rows labeled by $a\in A(u,v_1)$ (resp. $a\in A(u,v_2)$) for repair of node $i_1$ (resp. $i_2$).
If the erased node pair $(i_1,i_2)\in\mathcal{P}_0$, denote $\rho=\pi(i_1,i_2)$, we use parity check equations with rows labeled by $a\in A(\rho,0)$ (resp. $a\in A(\rho,1)$) for repair of node $i_1$ (resp. $i_2$).
In the following, we illustrate the precise repair process of the two kinds of repair patterns in Theorem \ref{thm1-repair} and Theorem \ref{thm2-repair}, respectively.
\begin{theorem}\label{thm1-repair}
Suppose the erased nodes $(i_1,i_2)\in\mathcal{P}_u$ for some $u\in[\lfloor\frac{n}{r}\rfloor]$. Write $i_j=(u-1)r+v_j+1$ with $v_j\in[0,r-1]$ for $j\in[2]$. The repair process includes the following two phases.

 \begin{itemize}
\item (Download phase)
Node $i_1$ downloads $\{c_{p,a}: a\in A(u,v_1)\}$ from each helper node $p\in[n]\setminus\{i_1,i_2\}$, and
node $i_2$ downloads $\{c_{p,a}: a\in A(u,v_2)\}$ from each helper node $p\in[n]\setminus\{i_1,i_2\}$.
\item (Collaboration phase) Node $i_1$ recursively computes and transmits data $\{c_{i_1,a(u,v_2)}+c_{i_2,a}\}_{a\in A(u,v_1)}$ to node $i_2$, and node $i_2$ recursively computes and transmits data $\{c_{i_1,a}+\tau c_{i_2,a(u,v_1)}\}_{a\in A(u,v_2)}$ to node $i_1$.
\end{itemize}
\end{theorem}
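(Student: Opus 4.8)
The plan is to verify, for the repair pattern $(i_1,i_2)\in\mathcal{P}_u$, that the data claimed to be downloaded and exchanged indeed suffices to recover $\bm c_{i_1}$ and $\bm c_{i_2}$, and that the access amount meets the bound in (\ref{bw-bound}) with $d=n-2$, $k=n-r$, $h=2$. The access accounting is immediate: each of the two newcomers accesses $\ell/r=r^{m-1}$ coordinates at each of the $d=n-2$ helpers, so $\gamma_a=2(n-2)r^{m-1}=\frac{dh\ell}{d-k+h}$, matching the bound; similarly the collaboration transfer is $\beta_2=r^{m-1}$ per direction, so the total bandwidth is optimal. Hence the substance of the proof is the decodability argument, which I would carry out by induction mirroring the proof for $\mathcal{C}_{\mathrm{II}}$ and reusing the partition $\mathcal{L}_0,\ldots,\mathcal{L}_{m-\lfloor n/r\rfloor}$.

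First I would write down the parity-check equations used by node $i_1$. Because $u\le\lfloor n/r\rfloor$, node $i_1=(u,v_1)$ and node $i_2=(u,v_2)$ sit in group $u$, and the $(v_1{+}1)$-st row of $(H_{t,1},\ldots,H_{t,n})$ restricted to rows $a\in A(u,v_1)$ has exactly the structure of the $v_1$-th row in (\ref{pc-block2}): for each fixed setting of the digits $a_{u'}$, $u'\ne u$, the symbols $c_{i_1,a}$, $\{c_{i_1,a(u,v)}+f(v_1,v)\,c_{(u,v),a}\}_{v\ne v_1}$ (with $f=1$ or $\tau$ depending on the order of $v$ and $v_1$) form, together with $\{c_{p,a}\}_{p>r\lfloor n/r\rfloor \text{ or other groups}}$, a GRS-type codeword, plus extra terms $\gamma_{w-1}^{t-1}c_{i_1,a(\sigma,w)}$ coming from the $\mathcal{C}_{\mathrm{I}}$-part whenever $\sigma\in\Omega_{i_1,0}$ and $a_\sigma=0$, or $\sigma\in\Omega_{i_1,1}$ and $a_\sigma=1$. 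I would then argue by induction on $s$ (and, within a fixed $s$, by a sub-induction on the nested $\Lambda_b$ exactly as in Theorem \ref{thm-mds}) that for $a\in\mathcal{L}_s$ all the $\gamma$-terms appearing refer to coordinates $a(\sigma,w)\in\mathcal{L}_{s-1}$, hence have already been recovered; so from the $r$ equations $t\in[r]$ node $i_1$ recovers $c_{i_1,a}$ and the $r-1$ sums $\{c_{i_1,a(u,v)}+f(v_1,v)c_{(u,v),a}\}_{v\ne v_1}$, in particular the single sum $c_{i_1,a(u,v_2)}+c_{(u,v_2),a}=c_{i_1,a(u,v_2)}+c_{i_2,a}$ — which is precisely the datum node $i_1$ transmits to node $i_2$ in the collaboration phase. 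Symmetrically, node $i_2$ uses rows $a\in A(u,v_2)$ and the $(v_2{+}1)$-st row of (\ref{pc-block2}): note that in this row the coefficient attached to $c_{i_1,a(u,v_2)}$ is $\tau$ (since $v_1<v_2$ forces the $\tau$-entry in position $(v_2{+}1,v_1{+}1)$), so node $i_2$ recovers $c_{i_2,a}$ together with $c_{i_1,a}+\tau c_{i_2,a(u,v_1)}$, which is what it sends back.

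Finally I would close the loop: node $i_1$ holds, for every $a\in A(u,v_1)$, the value $c_{i_1,a(u,v_2)}+c_{i_2,a}$ and receives from node $i_2$ the value $c_{i_1,a}+\tau c_{i_2,a(u,v_1)}$; reindexing $a\mapsto a(u,v_2)$ in the received data and using $\tau\neq0,1$ gives a $2\times2$ invertible system in the two unknowns $c_{i_1,a}$, $c_{i_2,a(u,v_1)}$ for each $a$, so node $i_1$ recovers all of $\bm c_{i_1}$ (the $A(u,v)$ for $v\ne v_1,v_2$ being handled by the same argument applied to the already-known sums $c_{i_1,a(u,v)}+f(v_1,v)c_{(u,v),a}$ once the relevant helper symbols are in hand — here one uses that these helper coordinates were downloaded). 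The same computation on the other side recovers $\bm c_{i_2}$. The main obstacle is bookkeeping rather than conceptual: one must check carefully that the recursive/inductive order on $s$ (so that every $\gamma$-coupled coordinate is to a strictly smaller layer $\mathcal{L}_{s-1}$) is consistent with the recursion announced in the collaboration phase, and that no coordinate outside the downloaded sets $A(u,v_1)$, $A(u,v_2)$ is ever needed — both of which follow from the fact that the only non-diagonal couplings of $H_{t,p}$ either stay within a fixed value of digit $u$ (the $\tau\lambda$ and $\lambda$ entries) or change a digit $\sigma\in[\lfloor n/r\rfloor+1,m]$ from $0$ or $1$ to some $w\in[2,r-1]$ (the $\gamma$ entries), the latter strictly decreasing $w_{\mathrm{suf}}$.
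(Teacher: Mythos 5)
Your argument follows essentially the same route as the paper's proof: induction on the layer index $s$ over the partition $\mathcal{L}_0,\ldots,\mathcal{L}_{m-\lfloor n/r\rfloor}$, the observation that every $\gamma$-coupled coordinate $a(\sigma,w)$ drops to $\mathcal{L}_{s-1}$ (hence is already downloaded or recovered), recovery of the $r$ symbols $\{c_{i_1,a(u,v)}: v\neq v_2\}\cup\{c_{i_1,a(u,v_2)}+c_{i_2,a}\}$ per row from the GRS-type system, and finally closing the exchange via the $2\times 2$ system made invertible by $\tau\neq 1$. One small over-complication worth flagging: the sub-induction on $\Lambda_b$ that you import from Theorem~\ref{thm-mds} is vacuous here and the paper omits it --- at a row $a\in A(u,v_1)$ the cross-group $\tau\lambda/\lambda$ couplings reference only coordinates $a(u',v)$ with $u'\neq u$, which keep $a_u=v_1$ fixed and hence point to already-downloaded helper data rather than to unknowns, so there is nothing to resolve group by group. (Also, in your symmetric step for node $i_2$ you write that the $\tau$ is ``attached to $c_{i_1,a(u,v_2)}$''; it is in fact attached to $c_{i_2,a(u,v_1)}$, though the conclusion $c_{i_1,a}+\tau c_{i_2,a(u,v_1)}$ you state next is correct.)
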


\begin{proof}

Recall $\mathcal{L}_s$ defined in Subsection \ref{mds-h2}, it has that
$\mathcal{L}_0, \ldots,\mathcal{L}_{m-\lfloor\frac{n}{r}\rfloor}$ form a partition of $[0,\ell-1]$.
We prove by induction on $s$ that for each $s\in[0,m-\lfloor\frac{n}{r}\rfloor]$, after download phase and collaboration phase,
node $i_1$ and $i_2$ can recover $\{c_{i_1,a}: a\in\mathcal{L}_s\}$ and $\{c_{i_2,a}: a\in\mathcal{L}_s\}$ respectively.

At first, let $s=0$.
We firstly consider data recovery of node $i_1$. 
For each $a\in \mathcal{L}_0\cap A(u,v_1)$, it has $a_u=v_1$ and $a_{\lfloor\frac{n}{r}\rfloor+1},\ldots, a_m\in[2,r-1]$.
According to the $a$-th row of parity check equations $(H_{t,1},\ldots,H_{t,n})\bm{c}^{\tau}=\bm{0}$, $t\in[r]$, it has
\begin{equation}\label{eq1-repairthm1}
\begin{aligned}
&\lambda_{i_1}^{t-1}c_{i_1,a}+\sum_{v=0}^{v_1-1}\lambda_{(u-1)r+v+1}^{t-1}(c_{(u-1)r+v+1,a}+\tau c_{i_1,a(u,v)})+
\sum_{v=v_1+1}^{r-1}\lambda_{(u-1)r+v+1}^{t-1}(c_{(u-1)r+v+1,a}+c_{i_1,a(u,v)})     \\
+& 
\sum_{j\in[n]\setminus[(u-1)r+1,ur]}\lambda_j^{t-1}c_{j,a} 
+\sum_{u'\in[\lfloor\frac{n}{r}\rfloor]\setminus\{u\}} \sum_{v\in[0,r-1]\setminus\{a_{u'}\}}\lambda_{(u'-1)r+v+1}^{t-1} c_{(u'-1)r+a_{u'}+1,a(u',v)}f(a_{u'},v)
  =0, ~~t\in[r],
\end{aligned}
\end{equation}
where $f(a_{u'},v)=1$ if $a_{u'}<v$ and $f(a_{u'},v)=\tau$ if $a_{u'}>v$.
Note that $a(u',v)\in A(u,v_1)$ for $u'\in[\lfloor\frac{n}{r}\rfloor]\setminus\{u\}$, then 
the symbols $c_{(u'-1)r+a_{u'}+1,a(u',v)}$ in \eqref{eq1-repairthm1} are downloaded data. Moreover, $c_{j,a}$ for $j\neq i_1,i_2$ are also downloaded and recall $i_2=(u-1)r+v_2+1$ where $v_1<v_2\leq r-1$.
Then node $i_1$ can recover from \eqref{eq1-repairthm1}  the following set of $r$ symbols
$$
\{c_{i_1,a(u,v)} : v\in[0,r-1]\setminus\{v_2\}\}\cup\{c_{i_1,a(u.v_2)}+c_{i_2,a}\}
$$
When $a$ runs over all integers in $\mathcal{L}_0\cap A(u,v_1)$, node $i_1$ can obtain 
$$
\{c_{i_1,a(u,v_2)}+c_{i_2,a}, ~c_{i_1,a(u,v)} : v\in[0,r-1]\setminus\{v_2\}\}_{a\in \mathcal{L}_0\cap A(u,v_1)}.
$$
In a similar way, by considering parity check equations labeled by $a\in \mathcal{L}_0\cap A(u,v_2)$, node $i_2$ can recover the symbols
$$
\{c_{i_1,a}+\tau c_{i_2,a(u,v_1)}, ~c_{i_2,a(u,v)} : v\in[0,r-1]\setminus\{v_1\}\}_{a\in \mathcal{L}_0\cap A(u,v_2)}.
$$
   Then by exchanging the data $\{c_{i_1,a(u,v_2)}+c_{i_2,a}\}_{a\in \mathcal{L}_0\cap A(u,v_1)}$ and $\{c_{i_1,a}+\tau c_{i_2,a(u,v_1)}\}_{a\in \mathcal{L}_0\cap A(u,v_2)}$ between the two nodes, node $i_1$ and node $i_2$ can recover $\{c_{i_1,a}\}_{a\in\mathcal{L}_0}$ and $\{c_{i_2,a}\}_{a\in\mathcal{L}_0}$, respectively.

Now suppose node $i_1$ and node $i_2$ has respectively recovered $\cup_{s'\leq s-1}\{c_{i_1,a}: a\in\mathcal{L}_{s'}\}$ and $\cup_{s'\leq s-1}\{c_{i_2,a}: a\in\mathcal{L}_{s'}\}$.
We prove the case $s'=s$ that node $i_1$ and node $i_2$ can respectively recover $\{c_{i_1,a}\}_{a\in\mathcal{L}_s}$ and $\{c_{i_2,a}\}_{a\in\mathcal{L}_s}$.
Let us firstly consider the recovery of node $i_1$.
For each $a\in\mathcal{L}_{s}\cap A(u,v_1)$, it has $a_u=v_1$ and $a$ has exactly $s$ digits in $(a_{\lfloor\frac{n}{r}\rfloor+1},\ldots, a_m)$ being $0$ or $1$, say $a_{\sigma_1},\ldots,a_{\sigma_s}$. By the $a$-th row of parity check equations $(H_{t,1},\ldots,H_{t,n})\bm{c}^{\tau}=\bm{0}$, $t\in[r]$, it has 
\begin{equation}\label{repair-eq2}
\begin{aligned}
&\lambda_{i_1}^{t-1}c_{i_1,a}+\sum_{v=0}^{v_1-1}\lambda_{(u-1)r+v+1}^{t-1}(c_{(u-1)r+v+1,a}+\tau c_{i_1,a(u,v)})+
\sum_{v=v_1+1}^{r-1}\lambda_{(u-1)r+v+1}^{t-1}(c_{(u-1)r+v+1,a}+c_{i_1,a(u,v)})     \\
+& 
\sum_{j\in[n]\setminus[(u-1)r+1,ur]}\lambda_j^{t-1}c_{j,a} 
+\sum_{u'\in[\lfloor\frac{n}{r}\rfloor]\setminus\{u\}} \sum_{v\in[0,r-1]\setminus\{a_{u'}\}}\lambda_{(u'-1)r+v+1}^{t-1} c_{(u'-1)r+a_{u'}+1,a(u',v)}f(a_{u'},v) \\
+&\sum_{j\in[n]}\sum_{z\in[s]}\sum_{v\in[2,r-1]}\chi(a,\sigma_{z},j)\gamma_{v-1}^{t-1}c_{j,a(\sigma_{z},v)}=0,  ~~~t\in[r],
\end{aligned}
\end{equation}
where $f(a_{u'},v)=1$ if $a_{u'}<v$ and $f(a_{u'},v)=\tau$ if $a_{u'}>v$, and $\chi(a,\sigma_z,j)=1$ if $\sigma_z\in\Omega_{j,0}, a_{\sigma_z}=0$ or $\sigma_z\in\Omega_{j,1}, a_{\sigma_z}=1$, otherwise, $\chi(a,\sigma_z,j)=0$. 
Observe that $a(\sigma_{z},v)\in \mathcal{L}_{s-1}\cap A(u,v_1)$ for $z\in[s]$ and $v\in[2,r-1]$, then $c_{j,a(\sigma_{z},v)}$ is either downloaded data in the helper node (when $j\in[n]\setminus\{i_1,i_2\}$) or has been recovered by the hypothesis (when $j\in\{i_1,i_2\}$).
Thus from (\ref{repair-eq2}) node $i_1$ can compute the following set of $r$ symbols
$$
\{c_{i_1,a(u,v)} : v\in[0,r-1]\setminus\{v_2\}\}\cup\{c_{i_1,a(u.v_2)}+c_{i_2,a}\}.
$$
When $a$ runs over all integers in $\mathcal{L}_{s}\cap A(u,v_1)$, node $i_1$ can obtain 
$$
\{c_{i_1,a(u,v_2)}+c_{i_2,a}, ~c_{i_1,a(u,v)} : v\in[0,r-1]\setminus\{v_2\}\}_{a\in \mathcal{L}_{s}\cap A(u,v_1)}.
$$
In a similar way, by considering parity check equations labeled by $a\in \mathcal{L}_{s}\cap A(u,v_2)$, node $i_2$ can recover the symbols
$$
\{c_{i_1,a}+\tau c_{i_2,a(u,v_1)}, ~c_{i_2,a(u,v)} : v\in[0,r-1]\setminus\{v_1\}\}_{a\in \mathcal{L}_{s}\cap A(u,v_2)}.
$$
   Then by exchanging the data $\{c_{i_1,a(u,v_2)}+c_{i_2,a}\}_{a\in \mathcal{L}_s\cap A(u,v_1)}$ and $\{c_{i_1,a}+\tau c_{i_2,a(u,v_1)}\}_{a\in \mathcal{L}_s\cap A(u,v_2)}$ between the two nodes, node $i_1$ and node $i_2$ can recover $\{c_{i_1,a}\}_{a\in\mathcal{L}_s}$ and $\{c_{i_2,a}\}_{a\in\mathcal{L}_s}$, respectively.
Thus the two nodes can be cooperatively recovered.

Moreover, 
in the collaboration phase, the total data node $i_1$ transmitted to node $i_2$ are
$$\cup_{s\in[0,m-\lfloor\frac{n}{r}\rfloor]}\{c_{i_1,a(u,v_2)}+c_{i_2,a}: a\in \mathcal{L}_s\cap A(u,v_1)\}=\{c_{i_1,a(u,v_2)}+c_{i_2,a}: a\in A(u,v_1)\}.$$
The data node $i_2$ transmitted to node $i_1$ are
$$\cup_{s\in[0,m-\lfloor\frac{n}{r}\rfloor]}\{c_{i_1,a}+\tau c_{i_2,a(u,v_1)}: a\in \mathcal{L}_s\cap A(u,v_2)\}=\{c_{i_1,a}+\tau c_{i_2,a(u,v_1)}: a\in A(u,v_2)\}.$$
Thus the total amount of data communicated in the collaboration phase is $\frac{2\ell}{r}$ symbols. Moreover, the amount of data downloaded and accessed in the download phase are both $\frac{2(n-2)\ell}{r}$ symbols. Then the total repair bandwidth is $\frac{2(n-1)\ell}{r}$ symbols, achieving the cut-set bound. 
\end{proof}

\begin{theorem}\label{thm2-repair}
Suppose the erased nodes $(i_1,i_2)\in P_0$, denote $\rho=\pi(i_1,i_2)$.
Then the repair process includes the following two phases.
 \begin{itemize}
\item (Download phase)
(1) Node $i_1$ downloads $\{c_{p,a}: a\in A(\rho,0)\}$ from each helper node $p\in[n]\setminus\{i_1,i_2\}$,
and can recover the data $\{c_{i_1,a}\}_{a\in[0,\ell-1]\setminus A(\rho,1)}\cup\{c_{i_2,a}\}_{a\in A(\rho,0)}$.
(2) Node $i_2$ downloads $\{c_{p,a}: a\in A(\rho,1)\}$ from each helper node $p\in[n]\setminus\{i_1,i_2\}$,
and can recover the data $\{c_{i_2,a}\}_{a\in[0,\ell-1]\setminus A(\rho,0)}\cup\{c_{i_1,a}\}_{a\in A(\rho,1)}$.
\item (Collaboration phase)
Node $i_1$ transmits $\{c_{i_2,a}\}_{a\in A(\rho,0)}$ to node $i_2$, and node $i_2$ transmits $\{c_{i_1,a}\}_{a\in A(\rho,1)}$ to node $i_1$.
\end{itemize}
\end{theorem}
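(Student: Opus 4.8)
\emph{Proof strategy.} The plan is to establish Theorem~\ref{thm2-repair} in three steps: (i) the download-phase recovery claim, which is the substance of the proof; (ii) the trivial completion in the collaboration phase; and (iii) the bandwidth and access count. Throughout I would treat the repair of node $i_1$ from the parity-check rows indexed by $A(\rho,0)$; node $i_2$ is handled identically after interchanging the digit values $0$ and $1$ at coordinate $\rho$, since $\rho\in\Omega_{i_2,1}$ plays for $i_2$ the symmetric role that $\rho\in\Omega_{i_1,0}$ plays for $i_1$.

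The first thing I would do is determine exactly which code symbols occur in the equations $(H_{t,1},\dots,H_{t,n})\bm{c}^{\tau}=\bm 0$ restricted to rows $a\in A(\rho,0)$. Since $\rho$ is a ``suffix'' coordinate (i.e.\ $\rho\in[\lfloor n/r\rfloor+1,m]$, as $(i_1,i_2)\in\mathcal{P}_0$), the only off-diagonal entries of any $H_{t,j}$ that alter the $\rho$-digit of the column index are the $\gamma$-entries at coordinate $\rho$, which are present in row $a$ only when $a_\rho=0,\ \rho\in\Omega_{j,0}$ or $a_\rho=1,\ \rho\in\Omega_{j,1}$; because $\pi$ is one-to-one on $\mathcal{P}_0$, $\rho=\pi(i_1,i_2)$, and $i_1<i_2$, this forces $\rho\in\Omega_{i_1,0}$, $\rho\in\Omega_{i_2,1}$, and $\rho\notin\Omega_{p,0}\cup\Omega_{p,1}$ for all $p\notin\{i_1,i_2\}$. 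It then follows that on the rows $A(\rho,0)$: the symbols of any $p\neq i_1,i_2$ that appear are exactly $\{c_{p,b}:b\in A(\rho,0)\}$ (all downloaded); the symbols of $i_1$ that appear are exactly $\{c_{i_1,b}:b_\rho\neq1\}$ (the diagonal and all group- and other-$\gamma$-entries keep $b_\rho=0$, while the coordinate-$\rho$ $\gamma$-entries of $H_{t,i_1}$ produce $b=a(\rho,w)$, $w\in[2,r-1]$); and the symbols of $i_2$ that appear are exactly $\{c_{i_2,b}:b\in A(\rho,0)\}$ (the coordinate-$\rho$ $\gamma$-entry of $H_{t,i_2}$ needs $a_\rho=1$). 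Hence the $r\cdot|A(\rho,0)|=\ell$ equations form a square linear system in the $(r-1)\ell/r+\ell/r=\ell$ unknowns $\{c_{i_1,b}:b_\rho\neq1\}\cup\{c_{i_2,b}:b_\rho=0\}$, and it suffices to show this system is invertible.

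To solve it I would run the same nested induction used in the proofs of Theorems~\ref{thm-mds} and~\ref{thm1-repair}, now with coordinate $\rho$ playing the role of the building block $\mathcal{C}_{\mathrm I}$ and the group coordinates still playing the role of $\mathcal{C}_{\mathrm{II}}$. The outer induction is on the number $s'$ of suffix coordinates \emph{other than} $\rho$ at which $a$ takes value $0$ or $1$ (so $s'=w_{\mathrm{suf}}(a)-1$ on $A(\rho,0)$); the inner induction, for fixed $s'$, is on the number $b\in\{0,1,2\}$ of the (at most two, and necessarily distinct, since $(i_1,i_2)\notin\mathcal{P}_u$) groups containing $i_1$ or $i_2$ whose index $a$ matches --- this is the $\Lambda_b$-refinement used in the proof of Theorem~\ref{thm-mds}. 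Fixing $a\in A(\rho,0)$ at outer level $s'$ and inner level $b$, in the $a$-th equations ($t\in[r]$) every $\gamma$-term at a coordinate $\sigma\neq\rho$ points to an index at outer level $s'-1$, and every group-structure term of $H_{t,i_1}$ or $H_{t,i_2}$ points to an index at inner level $b-1$, so all of these are known by the induction hypotheses while the remaining symbols of $p\neq i_1,i_2$ are downloaded; moving them to the right-hand side leaves an $r\times r$ Vandermonde system in the unknowns $\{c_{i_1,a(\rho,w)}:w\in\{0\}\cup[2,r-1]\}\cup\{c_{i_2,a}\}$ with distinct evaluation points $\lambda_{i_1},\gamma_1,\dots,\gamma_{r-2},\lambda_{i_2}$ (this is where the hypothesis $|F|>n+r-2$ of Construction~\ref{construction-3} enters), hence invertible. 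Carrying out the induction recovers $c_{i_1,b}$ for every $b$ with $b_\rho=0$ (directly) and with $b_\rho\in[2,r-1]$ (as $c_{i_1,a(\rho,b_\rho)}$ when $a=b(\rho,0)$), together with $c_{i_2,b}$ for every $b$ with $b_\rho=0$; this is the download-phase claim, and the argument for node $i_2$ on the rows $A(\rho,1)$ is the mirror image. The collaboration phase is then immediate: node $i_1$ holds $\{c_{i_2,a}\}_{a\in A(\rho,0)}$ and transmits it to node $i_2$ (completing $\bm c_{i_2}$, which already held $\{c_{i_2,a}\}_{a\notin A(\rho,0)}$), and symmetrically node $i_2$ transmits $\{c_{i_1,a}\}_{a\in A(\rho,1)}$ to node $i_1$. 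Finally each failed node accesses and downloads $\ell/r$ symbols at each of the $d=n-2$ helpers and exchanges $\ell/r$ symbols, so $\gamma_a=2(n-2)\ell/r$ and $\gamma=2(n-2)\ell/r+2\ell/r=2(n-1)\ell/r$, meeting \eqref{bw-bound} with $d-k+h=r$.

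The step I expect to be the main obstacle is the bookkeeping of this nested induction when $i_1$ and/or $i_2$ lie among the first $\lfloor n/r\rfloor r$ nodes and hence carry the $\mathcal{C}_{\mathrm{II}}$-type group structure: one must verify that at the innermost step the genuinely new unknowns are precisely the $r$ symbols of the coordinate-$\rho$ fibre through $a$ together with $c_{i_2,a}$, and that every group-neighbour symbol indeed sits at a strictly smaller inner level. The structural facts that make this go through are that $i_1$ and $i_2$ lie in \emph{distinct} groups, that the group coordinates are disjoint from $\rho$ so group-entries preserve both the $\rho$-digit and the outer level $s'$, and that changing a group digit strictly lowers the match count $b$ --- exactly the mechanics already exploited in the proofs of Theorems~\ref{thm-mds} and~\ref{thm1-repair}.
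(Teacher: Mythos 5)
Your proposal is correct and follows essentially the same strategy as the paper's proof: a double induction with the outer level on $w_{\mathrm{suf}}(a)$ restricted to $A(\rho,0)$ and the inner level on how many of $i_1,i_2$'s group coordinates $a$ matches, reducing each step to a system whose known part is moved to the right-hand side. Your formulation is a little cleaner in two places — you name the surviving $r\times r$ system as Vandermonde in $\lambda_{i_1},\gamma_1,\dots,\gamma_{r-2},\lambda_{i_2}$ rather than leaving invertibility implicit, and you phrase the inner recursion uniformly via the $\Lambda_b$ count $b\in\{0,1,2\}$ instead of the paper's explicit case split on whether $i_1,i_2$ lie above or below $r\lfloor n/r\rfloor$ — but the underlying argument and bookkeeping are the same.
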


\begin{proof}
We prove (1) in the download phase, then (2) follows similarly, and the collaboration phase and node recovery will be straightforward.
Recall the definition of $\mathcal{L}_s$ in Subsection \ref{mds-h2},
it has
$\mathcal{L}_0, \ldots,\mathcal{L}_{m-\lfloor\frac{n}{r}\rfloor}$ form a partition of $[0,\ell-1]$.
Then $A(\rho,0)\cap\mathcal{L}_s$, for $s\in[m-\lfloor\frac{n}{r}\rfloor]$ form a partition of $A(\rho,0)$.
Observe that 
$$\{c_{i_1,a}\}_{a\in[0,\ell-1]\setminus A(\rho,1)}\cup\{c_{i_2,a}\}_{a\in A(\rho,0)}=\{c_{i_1,a},c_{i_2,a},c_{i_1,a(\rho,2)},\dots,c_{i_1,a(\rho,r-1)}\}_{a\in A(\rho,0)}.$$
We prove by induction on $s$ that for each $s\in[m-\lfloor\frac{n}{r}\rfloor]$,
node $i_1$ can recover $\{c_{i_1,a},c_{i_2,a},c_{i_1,a(\rho,2)},\dots,c_{i_1,a(\rho,r-1)}\}_{a\in A(\rho,0)\cap\mathcal{L}_s}$ using the downloaded data $\{c_{p,a}: a\in A(\rho,0)\}_{p\in[n]\setminus\{i_1,i_2\}}$. 

At first, let $s=1$.
For each $a\in A(\rho,0)\cap\mathcal{L}_1$, it has $a_{\rho}=0$ and $a_j\in[2,r-1]$ for all $j\in[\lfloor\frac{n}{r}\rfloor+1,m]\setminus\{\rho\}$.
According to the $a$-th row of parity check equations $(H_{t,1},\ldots,H_{t,n})\bm{c}^{\tau}=\bm{0}$, $t\in[r]$, it has 
\begin{equation}\label{eq1-thm4}
\sum_{j\in[n]}\lambda_{j}^{t-1}c_{j,a}+
\sum_{u\in[\lfloor\frac{n}{r}\rfloor]}\sum_{v\in[0,r-1]\setminus\{a_u\}}\lambda_{(u-1)r+v+1}^{t-1}\underline{c_{(u-1)r+a_u+1,a(u,v)}} f(a_u,v) 
+\sum_{w=2}^{r-1}\gamma_{w-1}^{t-1}c_{i_1,a(\rho,w)}=0, ~~~t\in[r],
\end{equation}
where $f(a_u,v) =1$ if $a_u<v$ and $f(a_u,v) =\tau$ if $a_u>v$.
Next we claim that the underlined symbols $c_{(u-1)r+a_u+1,a(u,v)}$'s in \eqref{eq1-thm4} can be obtained by node $i_1$.
If $i_1>r\lfloor\frac{n}{r}\rfloor$, then $i_2>i_1>r\lfloor\frac{n}{r}\rfloor$, and $c_{(u-1)r+a_u+1,a(u,v)}$'s with $u\in[\lfloor\frac{n}{r}\rfloor]$,$v\in[0,r-1]\setminus\{a_u\}$ in \eqref{eq1-thm4} are downloaded data in the helper nodes since $a(u,v)\in A(\rho,0)\cap\mathcal{L}_1$.
If $i_1\leq r\lfloor\frac{n}{r}\rfloor$ and $i_2>r\lfloor\frac{n}{r}\rfloor$, then we can write $i_1=(u_1-1)r+v_1+1$ with some $u_1\in[\lfloor\frac{n}{r}\rfloor]$ and $v_1\in[0,r-1]$.
Firstly consider each $a\in A(\rho,0)\cap\mathcal{L}_1$ with $a_{u_1}\neq v_1$, the underlined symbols $c_{(u-1)r+a_u+1,a(u,v)}$'s in \eqref{eq1-thm4} are downloaded data. Along with the downloaded data $c_{j,a}, j\in[n]\setminus\{i_1,i_2\}$, node $i_1$ can solve from \eqref{eq1-thm4}  the unknown data $\{c_{i_1,a},c_{i_2,a},c_{i_1,a(\rho,2)},\dots,c_{i_1,a(\rho,r-1)}\}$ for all $a\in A(\rho,0)\cap\mathcal{L}_1$ with $a_{u_1}\neq v_1$.
Then consider each $a\in A(\rho,0)\cap\mathcal{L}_1$ with $a_{u_1}=v_1$.
In \eqref{eq1-thm4} when $u=u_1$, the underlined symbols $c_{(u_1-1)r+a_{u_1}+1,a(u_1,v)}=c_{i_1,a(u_1,v)}$ for $v\in[0,r-1]\setminus\{v_1\}$ have been previously solved out by node $i_1$.
When $u\in[\lfloor\frac{n}{r}\rfloor]\setminus\{u_1\}$, the symbols $c_{(u-1)r+a_{u}+1,a(u,v)}$ for $v\in[0,r-1]\setminus\{a_u\}$ are still downloaded data.
If $i_1<i_2\leq r\lfloor\frac{n}{r}\rfloor$, write $i_j=(u_j-1)r+v_j+1$ with $u_j\in[\lfloor\frac{n}{r}\rfloor]$, $v_j\in[0,r-1]$ for $j\in[2]$.
Firstly, consider each $a\in A(\rho,0)\cap\mathcal{L}_1$ with $a_{u_1}\neq v_1,a_{u_2}\neq v_2$, then the underlined symbols in \eqref{eq1-thm4}  are downloaded data. Node $ i_1$ can solve from \eqref{eq1-thm4}  the symbols $\{c_{i_1,a},c_{i_2,a},c_{i_1,a(\rho,2)},\dots,c_{i_1,a(\rho,r-1)}\}$. Then consider each $a\in A(\rho,0)\cap\mathcal{L}_1$ with $a_{u_1}\neq v_1,a_{u_2}=v_2$ or $a_{u_1}=v_1,a_{u_2}\neq v_2$.
Then the symbols $c_{(u-1)r+a_{u}+1,a(u,v)}$'s in \eqref{eq1-thm4} are either downloaded data at helper nodes or failed data at node $i_1$ or $i_2$ which have been previously recovered. Also, node $i_1$ can further recover from \eqref{eq1-thm4} the unknown symbols $\{c_{i_1,a},c_{i_2,a},c_{i_1,a(\rho,2)},\dots,c_{i_1,a(\rho,r-1)}\}$ for all $a\in A(\rho,0)\cap\mathcal{L}_1$ with $a_{u_1}\neq v_1,a_{u_2}=v_2$ or $a_{u_1}=v_1,a_{u_2}\neq v_2$.
At last, consider each $a\in A(\rho,0)\cap\mathcal{L}_1$ with $a_{u_1}=v_1,a_{u_2}=v_2$, then it is easy to see that the symbols $c_{(u-1)r+a_{u}+1,a(u,v)}$'s are either downloaded data  or have been previously recovered.
That is, the underlined data in \eqref{eq1-thm4} for all $a\in A(\rho,0)\cap\mathcal{L}_1$ can be obtained by node $i_1$. Thus using the downloaded data node $i_1$ can recover the following symbols
\[\{c_{i_1,a},c_{i_2,a},c_{i_1,a(\rho,2)},\dots,c_{i_1,a(\rho,r-1)}\}_{a\in A(\rho,0)\cap\mathcal{L}_1}.\]

Now suppose that node $i_1$ has recovered $\{c_{i_1,a},c_{i_2,a},c_{i_1,a(\rho,2)},\dots,c_{i_1,a(\rho,r-1)}\}_{a\in A(\rho,0)\cap\mathcal{L}_{s'}}$ for all $s'\leq s-1$. We prove the case $s'=s$ that $\{c_{i_1,a},c_{i_2,a},c_{i_1,a(\rho,2)},\dots,c_{i_1,a(\rho,r-1)}\}_{a\in A(\rho,0)\cap\mathcal{L}_{s}}$ can be recovered.

For each $a\in A(\rho,0)\cap\mathcal{L}_s$, it has $a_{\rho}=0$ and there exist exactly $s-1$ digits in $(a_{\lfloor\frac{n}{r}\rfloor+1},\ldots, a_{\rho-1},a_{\rho+1},\ldots,a_m)$ being $0$ or $1$, say $a_{\sigma_1},\ldots,a_{\sigma_{s-1}}$.
By the $a$-th row of parity check equations $(H_{t,1},\ldots,H_{t,n})\bm{c}^{\tau}=\bm{0}$, $t\in[r]$, it has
\begin{equation}\label{eq2-thm4}
\begin{aligned}
&\sum_{j\in[n]}\lambda_{j}^{t-1}c_{j,a}+
\sum_{u\in[\lfloor\frac{n}{r}\rfloor]}\sum_{v\in[0,r-1]\setminus\{a_u\}}\lambda_{(u-1)r+v+1}^{t-1}c_{(u-1)r+a_u+1,a(u,v)} f(a_u,v)  \\
+&\sum_{w=2}^{r-1}\gamma_{w-1}^{t-1}c_{i_1,a(\rho,w)}
+\sum_{j\in[n]}\sum_{z\in[s-1]}\chi(a,\sigma_z,j)\sum_{w\in[2,r-1]}\gamma_{w-1}^{t-1}c_{j,a(\sigma_z,w)}=0, ~~~t\in[r],
\end{aligned}
\end{equation}
where $f(a_u,v)$ and $\chi(a,\sigma_z,j)$ are defined in \eqref{repair-eq2}.
Since $a(\sigma_z,w)\in A(\rho,0)\cap\mathcal{L}_{s-1}$ for $z\in[s-1]$ and $w\in[2,r-1]$, then $c_{j,a(\sigma_z,w)}$ for $j\in[n]$ are either downloaded data at the helper nodes or have been previously recovered by the hypothesis.
Furthermore, similar as in the case of $s=1$, the symbols $c_{(u-1)r+a_u+1,a(u,v)}$'s with $u\in[\lfloor\frac{n}{r}\rfloor]$,$v\in[0,r-1]\setminus\{a_u\}$ in \eqref{eq2-thm4} can be obtained by node $i_1$. Then along with the downloaded data $c_{j,a}, j\in[n]\setminus\{i_1,i_2\}$, node $i_1$ can recover the following unknowns
\[\{c_{i_1,a},c_{i_2,a},c_{i_1,a(\rho,2)},\dots,c_{i_1,a(\rho,r-1)}\}_{a\in A(\rho,0)\cap\mathcal{L}_s}.\]
Therefore, node $i_1$ can recover the data $\{c_{i_1,a}\}_{a\in[0,\ell-1]\setminus A(\rho,1)}\cup\{c_{i_2,a}\}_{a\in A(\rho,0)}$ by downloading $\{c_{p,a}: a\in A(\rho,0)\}_{p\in[n]\setminus\{i_1,i_2\}}$.
Moreover, it is easy to verify that the cooperative repair satisfies optimal access property.
This completes the proof.

\end{proof}

\section{Conclusion}\label{conclusion}
We present an optimal-access cooperative MSR code with two erasures which has smaller sub-packetization than previous works. 
Despite the code still has exponential sub-packetization, it is possible to use this code and another large-distance code to build $\epsilon$-cooperative MSR codes with small data access and small sub-packetization as in \cite{Devi2019}.
An interesting problem in the future is to construct codes with any $h>2$ erasures and any $d$ helper nodes, and establish a lower bound on the sub-packetization of cooperative MSR codes.



%

\end{document}